\definecolor{hyptxt}{rgb}{0,0.5,1}
\definecolor{citetxt}{rgb}{0,0.5,1}
\definecolor{mygray}{gray}{0.5}
\newcommand{\ketbra}[2]{\ket{#1}\!\bra{#2}} 
\theoremstyle{definition}
\newtheorem{theorem}{Theorem}
\newtheorem{corollary}[theorem]{Corollary}
\newtheorem{remark}[theorem]{Remark}
\begin{document}
\title{Strong convergence of a resolution of the identity via canonical coherent states}

\author{Ryo Namiki}%
\affiliation{Department of Physics, Gakushuin University, 1-5-1 Mejiro, Toshima-ku, Tokyo 171-8588, Japan}

\begin{abstract} 
A resolution of the identity due to canonical coherent states is often proven in the weak operator topology. However, such a resolution with an integral symbol is typically supposed to hold in the strong operator topology associated with the framework of the spectral theorem. We provide an elementary proof of the strong convergence for the resolution of the identity due to canonical coherent states starting with a mostly familiar setup. Further, we enjoy a different proof and show that the relevant uniform limit does not exist.
\end{abstract}

\maketitle



\section{Introduction}\label{IntroSC}

Canonical coherent states play significant roles in cultivating quantum optics and other wide area of quantum physics \cite{CSReview95,Antoine2000,GarciadeLeon08,Parisio2010}.
While they are certainly useful in some operator calculus,
it is often noted in literatures that the operator identity 
\begin{align}
I 
=    \pi ^{-1} \int \ket{\alpha}\bra{\alpha} d ^2 \alpha \label{ResI00}
\end{align} holds in the weak sense. This means that an arbitrary inner product  satisfies the following relation: 
\begin{align}
\braket{\psi | \phi} = \pi ^{-1} \int \braket{\psi | \alpha }\braket{\alpha| \phi} d^2 \alpha. \label{Qn221019-1}
\end{align}
Defining operator's action through this form could be convenient in the point that one can avoid to define an operator-valued integral, however, this seems to have no other merits. Although an operator on finite dimensional linear algebra is completely characterized by its matrix elements, 
 such characterization is not generally hold for operators regarding infinite dimensional systems.  In fact,  textbooks of operator algebra states that the convergence in the weak topology does not generally imply the convergence in the strong topology. Therefore, the weak convergence of Eq.~\eqref{Qn221019-1} does not immediately admit the decomposition in the strong sense:
\begin{align}
\phi = \pi ^{-1}\int \ket{\alpha}\braket{\alpha | \phi} d ^2 \alpha.   \label{expansionSZeo}
\end{align} 
Here,  the form of the integral that takes a value on a Hilbert space  (the $\cal H$-valued integral) reminds us the spectral decomposition theorem and, it would be somehow surprising if such a familiar expansion is thought to be unavailable. In the spectral theorem, a resolution of the identity is composed of a family of monotone projection operators and an integral form of the operator decomposition holds in the strong  operator topology. We may expect the same convergence topology for the family of coherent states because  it can define a monotone sequence of positive operators though not projective. 
In contrast,  it is typical to find a statement which notices the integral operator holds in the  weak sense in literatures, and there seems almost no chance to encounter a complete statement of the strong convergence. 
  Let alone the definition of  the $\cal H$-valued integral.

Consequently,  we may frequently use the decomposition due to the strong convergence Eq.~\eqref{expansionSZeo} regardless there is almost no way to find its formal proof in textbooks. 
As an interesting exception, one can find a comment in one of {Klauder's lecture note} 
that states the strong convergence with an  outline of a possible proof \cite{KlauderLectureNote}. Unfortunately, it appears as a part of an exercise, and is less likely to be spotted.  

In passing it could be natural to ask why its norm convergence is unavailable in the first place. Similar to an integral of a constant function over  an infinite volume, it is almost trivial that  the operator-valued integral in Eq.~\eqref{ResI00} does not exists, and thus the resolution of the identity holds at most in the strong sense. 
Again, it seems unlikely to encounter a proof of the nonexistence. 

To this end, it would be worth making a definite statement whether or not the resolution of the identity in Eq.~\eqref{ResI00} holds in the strong sense. In addition, it seems better to confirm the nonexistence of the  resolution of the identity with regard to the norm topology once in a while.

In this article, we provide an elementary proof  of the strong convergence of the resolution of the identity via canonical coherent states with a mostly familiar setup.  We start with basic definitions  and prove its convergence with regard to  the strong operator topology in Sect.~\ref{SectBesicProof}. 
    We review  Klauder's approach as a different proof that lifts up the weak convergence to the strong convergence in Sect.~\ref{SectKlauderApproach}. We give a formal proof that the uniform limit does not exist in Sect.~\ref{SecNoUniformLimit}.    We conclude this article in Sect.~\ref{ClosingSECTION}.



\section{Basic notions and an elementary proof of the strong convergence} \label{SectBesicProof}
Let $(\ket{n})_{n=0}^\infty $ be an orthonormal basis and $ {\cal H} = \ell ^2[0,\infty)$ be a complex hilbert space spanned by  $(\ket{n})_{n=0}^\infty $. 
Any vector $\phi \in \cal H$ admits the orthonormal expansion 
\begin{align}\label{abState}
\phi = \sum_{n=0}^\infty a_n \ket{n}  
\end{align}
 with $\sum_{n=0}^\infty |a_n|^2 <\infty$. 
 We will define the inner product by $\braket{\phi| \varphi} := \sum_{n=0}^\infty  a_n ^* b_n$ for $\phi = \sum_{n=0}^\infty a_n \ket{n}  $ and $\varphi = \sum_{n=0}^\infty b_n \ket{n} $. The norm on $\cal H$  is defined as \begin{align}  \| \phi \| = \sqrt {\braket{\phi |\phi}}. \end{align}
A coherent state with an amplitude $\alpha \in \mathbb C$, is defined as 
\begin{align}
 \ket{\alpha } = e^{- |\alpha |^2 /2} \sum_{n=0}^\infty \alpha ^n \ket{n} / \sqrt {n!}. \label{coState}
\end{align}
It holds $\|\ket{\alpha}\| =1$ and $\braket{ n |\alpha}= e^{- |\alpha |^2 /2} \alpha ^n / \sqrt {n!} $. 

 We may concern the following three topologies of operator's convergence. 
We say an operator sequence $(A_n)$ converges to $A$ in the weak operator topology if
\begin{align}
A =  w\text{-}\lim_{n \to \infty} A_n \ \Leftrightarrow \    
\lim_{n \to \infty}  \braket{ \psi | A- A_n | \phi} =0, \ \forall \phi,\psi \in \cal H.  \end{align}  
We say $(A_n)$ converges to $A$ in the strong operator topology  (or the topology of $\cal H$) if 
\begin{align}
A =  s\text{-}\lim_{n \to \infty} A_n \ \Leftrightarrow \    
\lim_{n \to \infty} \|A \phi - A_n \phi \| =0, \  \forall \phi \in \cal H.
 \end{align}
We say $(A_n)$ converges to $A$ in the uniform operator topology (or norm topology) if 
\begin{align}
A =  \lim_{n \to \infty} A_n \quad \Leftrightarrow \quad   \label{defNormTop}
\lim_{n \to \infty} \|A   - A_n  \| =0,   \end{align}  
where the operator norm is defined by $\sup_{\| \phi \| \le 1 }\| A \phi\| $.

Our primary goal 
is to show the strong convergence: 
\begin{align}
I = s \text{-} \lim_{n \to \infty} \left( \int _{|\alpha | \le n} \ket{\alpha } \bra{\alpha}  \frac{  d^2 \alpha  }{\pi} \right) .
\end{align}
This can be accomplished by the following theorem, and we can safely use the decomposition in Eq.~\eqref{expansionSZeo}.

\begin{theorem} \label{Theorem10}
Let be $\varphi \in {\cal H}$. For any $\epsilon >0$, there exists $R>0 $ such that 
\begin{align}
\left \|  \ket{\varphi } -\left( \int _{|\alpha | \le r} \ket{\alpha } \braket{\alpha| \varphi}  \frac{  d^2 \alpha  }{\pi} \right)  \right \| < \epsilon
\end{align}
whenever  $r  \ge R$. 
\end{theorem}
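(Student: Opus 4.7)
My plan is to diagonalize the truncated operator $I_r := \pi^{-1}\int_{|\alpha|\le r} \ket{\alpha}\bra{\alpha}\, d^2\alpha$ in the number basis and reduce the statement to convergence of a numerical series. Writing $\varphi = \sum_{n=0}^\infty a_n \ket{n}$ with $\sum_n |a_n|^2 < \infty$ and inserting the expansion \eqref{coState}, I pass to polar coordinates $\alpha = \rho e^{i\theta}$. The angular integral
\[
\int_0^{2\pi} e^{i(n-m)\theta}\, d\theta = 2\pi\, \delta_{nm}
\]
eliminates all off-diagonal contributions, and after the substitution $t = \rho^2$ what remains is the diagonal action
\[
I_r \ket{\varphi} = \sum_{n=0}^\infty a_n\, \gamma_n(r) \ket{n},\qquad \gamma_n(r) := \int_0^{r^2} \frac{e^{-t}\, t^n}{n!}\, dt,
\]
so that $\gamma_n(r)$ is a regularized lower incomplete gamma function with $0\le \gamma_n(r)\le 1$ and $\lim_{r\to\infty}\gamma_n(r)=1$ for each fixed $n$.

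By orthogonality of the number basis, the quantity to be controlled becomes
\[
\bigl\|\, \ket{\varphi} - I_r \ket{\varphi} \,\bigr\|^2 = \sum_{n=0}^\infty |a_n|^2\, \bigl(1 - \gamma_n(r)\bigr)^2.
\]
To drive this below $\epsilon^2$, I fix $\epsilon>0$ and split the sum at some index $N$. Since $\sum_n |a_n|^2$ converges, I choose $N$ so that $\sum_{n>N} |a_n|^2 < \epsilon^2/2$; the uniform bound $0 \le 1-\gamma_n(r) \le 1$ then controls the tail independently of $r$. For the finite head $\sum_{n\le N}$, pointwise convergence $\gamma_n(r)\to 1$ together with the finiteness of the index set allows me to select $R$ large enough that the head contribution falls below $\epsilon^2/2$ whenever $r\ge R$. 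Combining the two estimates yields the claimed bound.

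The main obstacle I anticipate is justifying the interchange of summation and integration used to reach the diagonal formula for $I_r \ket{\varphi}$. Because the region $|\alpha|\le r$ has finite Lebesgue measure and the integrand inherits the Gaussian decay $e^{-|\alpha|^2}$ multiplied by coefficients $1/\sqrt{n!\,m!}$, an absolute-convergence estimate followed by Fubini handles the term-by-term integration and the angular simplification simultaneously. Once the diagonalization is in hand, the rest is the standard tail/head argument for an absolutely convergent series, and no subtler machinery (dominated convergence on the spectral side, etc.) is needed.
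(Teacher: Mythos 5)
Your proposal is correct and follows essentially the same route as the paper: diagonalize the truncated operator in the number basis via polar coordinates, identify the diagonal entries as the regularized incomplete gamma functions $\gamma_n(r)=I_n(r^2)$, and run the standard head/tail estimate on $\sum_n |a_n|^2\bigl(1-\gamma_n(r)\bigr)^2$. The only cosmetic difference is that the paper controls the head uniformly via the monotonicity $I_{n+1}(r)\le I_n(r)$, so that a single limit $I_k(r^2)\to 1$ suffices, whereas you take the maximum over the finitely many head indices; both are equally valid, and your Fubini justification of the term-by-term integration plays the role of the paper's Appendix on exchanging integral and summation.
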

\begin{remark} \label{remark20} Since $\braket{\alpha| \varphi}$ is a uniformly  bounded  continuous function of $\alpha$, and the state vector $ \ket{\alpha}$ is continuous in the sense $ \| \ket{\alpha} -\ket{\beta }\| \to 0 \quad (|\alpha -\beta | \to 0 )$, the integrand $ \ket{\alpha } \braket{\alpha | \varphi}$ is continuous and norm bounded on $| \alpha | \le R $. Therefore, the vector-valued Riemann sum over  the finite area $| \alpha | \le R $ converges to a state vector in $\cal H$. This gives a $\cal H$-valued integral and guarantees $\int _{|\alpha | \le r} \ket{\alpha } \braket{\alpha| \varphi}   {d^2 \alpha  }/{\pi}  \in {\cal H}$. 
We can deal with the integrability beyond the continuous functions  in terms of the Bochner integral. 
Notably, a ${\cal H}$-valued function is integral iff its norm is square-integrable   (See Theorem~\ref{NFBochner} in  Appendix~\ref{B-integral}).
\end{remark}

\begin{proof}
Let be $r >0$, and let us  
 define 
\begin{align}
I_n(r) = \int ^{r}_{0}\dfrac{y^{n}e^{-y}}{n!}dy, \qquad \left( n =0,1,2,3,\cdots \right). 
\end{align}
We will repeatedly use the following properties (See  Appendix~\ref{AppIntegral1} for a proof). 
\begin{align}
 \text{(i) }& I_{n+1}(r) \le I_n(r), 
 \text{ (ii) }  0\le I_n(r) \le 1, \nonumber\\ 
 \text{ (iii) }& \lim_{r \to \infty} I_n(r) =1, \text{ (iv) }  |1- I_n (r)| \le 1 .  \label{Iprop}
\end{align}
The property (iv) follows from (i) and (ii).

From the expansion in Eqs.~\eqref{abState} and \eqref{coState}, and a  somewhat lengthy process (see Appendix \ref{somewhatlengthy}),  we have 
\begin{align} & \left \|  \ket{\varphi } -\left( \int _{|\alpha | \le r} \ket{\alpha } \braket{\alpha| \varphi}  \frac{  d^2 \alpha  }{\pi} \right)  \right \|^2 \nonumber \\ 
= &\left\| \sum _{n}\left( 1-\int ^{r^{2}}_{0}\dfrac{y^{n}e^{-y}}{n!}dy\right)   \varphi _{n} \ket{n}\right\| ^2  \nonumber\\
 = &  \sum _{n = 0}^\infty\left( 1- I_n(r^2)  \right) ^{2}   | \varphi _{n} |^2  \nonumber \\
 \le & \sum _{n = 0}^ k\left( 1-I_n(r^2) \right) ^{2}   | \varphi _{n} |^2  +  \sum _{n = k+1 }^\infty  | \varphi _{n} |^2 \label{Int6}
\end{align}
where we used the property (iv) in Eq.~\eqref{Iprop} 
to obtain the last inequality.

Let be $\epsilon >0$. 
Since $\varphi \in \cal H$ we can select a sufficiently large  $K \in \mathbb N$ such that it holds for $k \ge K$
\begin{align}
\sum _{n = k+1 }^\infty  | \varphi _{n} |^2 < \frac{1}{2} \epsilon.  \label{Int7}
\end{align}
From the properties (i) and (ii) in Eq.~\eqref{Iprop},  $n \le k$ implies   
\begin{align}
\left( 1-I_n (r^2)\right)^2 \le  (1- I_k(r^2))^2.   
\end{align}
This relation leads to  \begin{align}
& \sum _{n = 0}^ k\left( 1-I_n(r^2)\right) ^{2}   | \varphi _{n} |^2 \nonumber \\
\le& \left( 1- I_k(r^2) \right)^2 \sum _{n = 0}^ k   |\varphi_n |^2 
\le   \left( 1- I_k(r^2) \right) ^2 \| \varphi \|^2. 
 \end{align}
From the property (iii) in Eq.~\eqref{Iprop}, 
we can select a sufficiently large $R>0$  such that, for $r \ge R$, it holds 
\begin{align}
\left( 1- I_k(r^2) \right) ^2  \|\varphi\| ^2 < \frac{1}{2}\epsilon. \label{Int9}
\end{align}

Concatenating Eqs.~\eqref{Int6},~\eqref{Int7}, and~\eqref{Int9}
we obtain \begin{align} & \left \|  \ket{\varphi } -\left( \int _{|\alpha | \le r} \frac{ \ket{\alpha } \bra{\alpha}}{\pi} d^2 \alpha  \right)  \ket{\varphi} \right \| ^2  \nonumber \\ 
 \le & (1+I_k(r))^2 \| \varphi \|^2 
  +  \sum _{n = k+1 }^\infty  | \varphi _{n} |^2 <  \epsilon.   
\end{align}
This proves the statement of Theorem~\ref{Theorem10} and  concludes the strong convergence for the resolution of the identity due to coherent states:    
\begin{align}
 I = s\text{-}\lim_{r \to \infty} \int _{|\alpha | \le r} \frac{ \ket{\alpha } \bra{\alpha}}{\pi} d^2 \alpha.  
\end{align}

\end{proof}

 \begin{remark} \label{remark30}For our primary purpose, the $\cal H$-valued integral as in Eq.~\eqref{QnstAn} is sufficient, and 
  it is unnecessary to define an operator valued integral such as 
   \begin{align}
A_n =  \int_{ |\alpha  | \le n} \ket{\alpha}\bra{\alpha} \frac{d^2\alpha}{\pi} \quad (n =1,2,3,\dots). \label{AnOpVal}
\end{align} 
 However, it would be worth noting that this operator-valued integral does exist as long as $n$ is finite. We can proof this fact similar to Remark~\ref{remark20}. Since the density operator of a coherent state $ \ketbra{\alpha}{\alpha}$ is continuous in the sense $ \| \ketbra{\alpha}{\alpha} -\ketbra{\beta }{\beta}\| \to 0 \quad (|\alpha -\beta | \to 0 )$,  
the operator-valued Riemann sum over  a finite area $| \alpha | \le R $ converges to a compact operator. The integral also converges in the trace norm topology as long as the integration volume is finite. 
\end{remark}

\begin{remark}
As we will show in Sect.~\ref{SecNoUniformLimit}, 
$(A_n)$ of Eq.~\eqref{AnOpVal}, does not converge to the unit operator $I$ in the norm topology of Eq.~\eqref{defNormTop}, and this sequence has no uniform limit. 
Such a property can be seen  on a sequence of projection operators in the form $ B_n := \sum_{k=0}^n \ket{k}\bra{k} $. In fact, this sequence $(B_n)$ does not converge to the unit operator $I$ in the operator norm topology. Moreover, since $\| B_n -B_m\| =1 \quad (n \neq m)$, no subsequence of $ (B_n)$ converges in the norm topology. A norm space  is referred to as the compact space when any bounded sequence has a convergent subsequence. In this regards, 
the space of bounded operators is not compact, and even quite a simple decomposition such as $I = \sum_{n} \ket{n}\bra{n}$ is unavailable with respect to the norm topology unless the dimension is finite. 
\end{remark}

\section{Review of Klauder's approach} \label{SectKlauderApproach}
Here we assume the weak convergence and prove the strong convergence 
 based on the outline given in Klauder's lecture note 
  \cite{KlauderLectureNote}. 
 
 We consider a sequence of positive operators defined as
\begin{align}
 A_n \ket{\varphi}:=  \int_{ |\alpha  | \le n} \ket{\alpha}\braket{\alpha| \varphi} \frac{d^2\alpha}{\pi} \quad (n =1,2,3,\dots). \label{QnstAn}
\end{align} The existence of this $\cal H$-valued integral is guaranteed by the prescription noted in Remark~\ref{remark20}. 
From the construction it holds  $\braket{\varphi| A_n | \varphi} \ge 0$ (See Appendix~\ref{BraAction}), and thus $A_n$ is positive. In what follows, we denote this operator positivity by $A_n\ge 0$.   Since  $ A_n \ket{\varphi} = \sum_{k=0}^\infty I_k(n^2) \varphi_k \ket k $ holds, we can confirm
\begin{align}
\braket{\varphi|A_n | \varphi} \le \| \varphi \|^2 , \quad\braket{\varphi|A_n | \varphi} \le \braket{\varphi|A_{n+1}|\varphi}, \label{Apositive} \\
\|A_n \varphi\| \le (I_0(n^2))^2\|\varphi\| \le \|\varphi\|, \label{Abounded}
\end{align}
where we repeatedly use the propery (i)~and~(ii) in Eq.~\eqref{Iprop}. 

The relations in Eq.~\eqref{Apositive} imply $A_n \le I$ and  $A_{n} \le A_{n+1}$ for $n \in \{0, 1,2, \dots \}$.  The relation in Eq.~\eqref{Abounded} implies $(A_n ) $ is a sequence of bounded operators and their operator norm is bounded as $\|A_n\| \le 1 = \| I\| $.  

Up to here, we have confirmed that $(A_n)$ is a sequence of positive bounded operators which satisfies  (i) $ 0 \le A_n \le I$ and (ii) $A_n \le A_{n+1}$. 
Armed with this boundedness and monotonicity, in his lecture note, Klauder suggested to prove the following theorem:
\begin{theorem}
Let $0 \le A_n \le I$ and $A_n \le A_{n+1}$. Suppose that $(A_n)$ converges to $I$ in the weak operator topology as 
\begin{align}
 \lim_{n \to \infty} \braket{\phi|I -A_n|\varphi}=0 \quad \forall \phi,\varphi \in{\cal H}. 
\end{align}
Then, $(A_n) $ converges to $I$ in the strong operator topology, namely 
\begin{align}
\lim_{n \to \infty} \left\| \left( I-A_{n}\right) \phi \right\| =0  \quad \forall \phi \in{\cal H}. 
\end{align}
\end{theorem}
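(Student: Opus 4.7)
The plan is to reduce strong convergence to the assumed weak convergence by a simple squaring trick. I set $T_n := I - A_n$; by hypothesis each $T_n$ is self-adjoint with $0 \le T_n \le I$, hence $\|T_n\| \le 1$. The crucial step is the operator inequality $T_n^2 \le T_n$, valid for any positive contraction. I would establish this by the algebraic factorization $T_n - T_n^2 = T_n^{1/2}(I - T_n) T_n^{1/2} = T_n^{1/2} A_n T_n^{1/2} \ge 0$, which relies only on the existence of a positive square root for a positive bounded operator. Equivalently, by the spectral theorem, the scalar inequality $\lambda^2 \le \lambda$ on $[0,1]$ lifts to the operator level.

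With this inequality in hand, for arbitrary $\phi \in {\cal H}$,
\[
\| T_n \phi \|^2 = \braket{\phi | T_n^2 | \phi} \le \braket{\phi | T_n | \phi} = \braket{\phi | I - A_n | \phi},
\]
and the right-hand side tends to zero as $n \to \infty$ by applying the weak convergence hypothesis with $\psi = \phi$. Taking square roots yields $\|(I - A_n) \phi \| \to 0$, which is precisely the strong convergence of $A_n$ to $I$.

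The main, and relatively mild, obstacle is justifying $T_n^2 \le T_n$ at a level of rigour consistent with the ``mostly familiar'' tone of Section~\ref{SectBesicProof}. I would favour the factorization route above, as it needs only the positive square root of a positive bounded operator rather than the full spectral calculus. Note that the monotonicity hypothesis $A_n \le A_{n+1}$ is never invoked in this argument: strong convergence here follows purely from positivity, the upper bound by $I$, and the weak convergence. The monotonicity would enter, if at all, only when separately verifying the weak limit for the coherent-state sequence in the first place.
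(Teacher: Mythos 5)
Your proof is correct, and it is a leaner variant of the paper's argument. Both proofs hinge on the same key lemma --- for a positive contraction $B$ one has $B^2 \le B$, established via the square-root factorization $B - B^2 = B^{1/2}(I-B)B^{1/2} \ge 0$ --- but you apply it to $T_n = I - A_n$ whereas the paper applies it to $A_n$ itself. That single change of target shortens the argument considerably: you get
\begin{align}
\|(I-A_n)\phi\|^2 = \braket{\phi|(I-A_n)^2|\phi} \le \braket{\phi|I-A_n|\phi} \to 0 \nonumber
\end{align}
in one line. The paper instead expands $\|(I-A_n)\phi\|^2$ into four terms, handles the cross terms by the weak limit, and is then left to prove $\|A_n\phi\| \to \|\phi\|$, which it does by squeezing: $A_n^2 \le A_n \le I$ gives $\|A_n\phi\| \le \|\phi\|$ from above, and Cauchy--Schwarz plus the weak limit gives $\|\phi\| \le \lim_n \|A_n\phi\|$ from below. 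Your route avoids a minor blemish in that squeeze: the paper writes $\lim_n \|A_n\phi\|$ before the existence of that limit has been established (monotonicity of $A_n$ does not immediately give monotonicity of $A_n^2$, since squaring is not operator monotone), so strictly one should argue with $\liminf$ there; your one-line estimate sidesteps the issue entirely. You are also right that the monotonicity hypothesis $A_n \le A_{n+1}$ is never needed in either proof of this implication --- it matters only for establishing the weak convergence of the concrete coherent-state sequence, and for the general principle (Vigier's theorem) that a bounded monotone net converges strongly even without a presupposed weak limit.
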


\begin{proof}
Let be $\phi \in \cal H$. A straightforward calculation leads to 
\begin{align}
\left\| \left( I-A_{n}\right) \phi \right\|^2 =& \langle \left( I-A_{n}\right) \phi | \left( I-A_{n}\right) \phi \rangle \nonumber \\
= &\left\| \phi \right\| ^{2}-\left \langle  A_{n}\phi | \phi \right \rangle -\langle \phi | A_{n}\phi \rangle +\left\| A_{n}\phi \right\| ^{2}.   \nonumber
\end{align}  From this formula and  the weak limit,  $\left \langle  A_{n}\phi |\phi \right \rangle  \to \|\phi\|^2$
and 
 $\left \langle  \phi | A_{n}\phi \right \rangle  \to \|\phi\|^2$, we only have to prove the following convergence:  
\begin{align}
 \|A_n \phi\| \to \|\phi\| \quad (n \to \infty) . 
\end{align}

 Let us admit that a positive operator has a unique positive square root. From the decomposition $(A_n-A_n^2) = A_n^{1/2}(I-A_n)A_n^{1/2}$   and $I-A_n \ge 0$, it holds  $A_n- A_n^2 \ge0 $, i.e.,  $A_n \ge A_n^2$. Then, $I \ge A_n \ge A_n^2$ leads to  $\|\phi\|^2\ge \braket{\phi |A_n |\phi} \ge \braket{\phi|A_n^2|\phi}=\|A_n \phi\| ^2$. We thus have 
 \begin{align}
 \|\phi\| \ge \|A_n \phi\|. \label{SONO1}
\end{align}
 In turn, Schwarz's inequality yeilds $\braket{\phi | A_n\phi} \le \|\phi\| \| A_n\phi\|$. The weak limit of the left-hand-side term 
   implies 
   \begin{align}
 \|\phi\|^2=\lim_n \braket{\phi,A_n \phi} \le \|\phi\| \lim_n \|A_n \phi\| . 
\end{align} 
 This relation together with Eq.~\eqref{SONO1} 
 lead to  
 \begin{align}
 \| \phi\| \le \lim_n\| A_n \phi\| \le \| \phi\|. 
\end{align} We thus conclude $ \lim_n\| A_n \phi\| =\| \phi\|$.
\end{proof}

\section{no uniform limit} \label{SecNoUniformLimit}
In this section, we quickly prove that the sequence of operators $(A_n)$ does not converge uniformly.
\begin{theorem} \label{TheoNonUni}
The following relation holds for the operator sequence $(A_n)$ defined as in Eq.~\eqref{QnstAn}:  
\begin{align}
 (\forall \epsilon >0) (\forall n  \in \mathbb N ) \quad\|A_n-I\|> 1-\epsilon .  \label{nonUniform000}
\end{align}
 \end{theorem}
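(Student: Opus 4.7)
The plan is to exploit the diagonal structure of $A_n$ in the Fock basis $(\ket{k})_{k=0}^\infty$: the computation already carried out for Theorem~\ref{Theorem10} shows $A_n\varphi = \sum_{k=0}^\infty I_k(n^2)\varphi_k \ket{k}$, so in particular $A_n \ket{k} = I_k(n^2) \ket{k}$ and $(I - A_n)\ket{k} = (1 - I_k(n^2))\ket{k}$, where by property (ii) of Eq.~\eqref{Iprop} each factor $1 - I_k(n^2)$ lies in $[0,1]$. Since $\ket{k}$ is a unit vector, we immediately obtain the lower bound
\begin{align}
\|A_n - I\| \ge 1 - I_k(n^2) \quad \text{for every } k \in \mathbb{N}.
\end{align}
The theorem therefore reduces to showing that, for each fixed $n$, the right-hand side can be made arbitrarily close to $1$ by taking $k$ large enough; equivalently, $\lim_{k \to \infty} I_k(n^2) = 0$.

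To establish this scalar limit I would use the elementary majorization $y^k \le n^{2k}$ on $y \in [0, n^2]$:
\begin{align}
I_k(n^2) = \int_0^{n^2} \frac{y^k e^{-y}}{k!}\, dy \le \frac{n^{2k}}{k!}\int_0^{n^2} e^{-y}\, dy \le \frac{n^{2k}}{k!}.
\end{align}
Because the series $\sum_{k=0}^\infty n^{2k}/k! = e^{n^2}$ converges, its general term satisfies $n^{2k}/k! \to 0$, and hence $I_k(n^2) \to 0$ as $k \to \infty$. Given $\epsilon > 0$ and $n \in \mathbb{N}$, one then picks $k$ so large that $I_k(n^2) < \epsilon$; this yields $\|A_n - I\| \ge 1 - I_k(n^2) > 1 - \epsilon$, which is precisely Eq.~\eqref{nonUniform000}.

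I do not anticipate a serious obstacle in executing this plan. The essential observation is that, for any fixed integration radius $n$, the weight $y^k e^{-y}/k!$ concentrates near $y = k$ and so escapes the interval $[0, n^2]$ once $k$ greatly exceeds $n^2$. The only substantive conceptual point is translating this decay of a scalar quantity into failure of norm convergence of the operator sequence, and the diagonal structure of $A_n - I$ in the Fock basis makes that translation transparent. Heuristically, the very number states $\ket{k}$ with $k \gg n^2$ which made the strong convergence in Theorem~\ref{Theorem10} slow (since $A_n$ nearly annihilates them) now serve as worst-case unit vectors witnessing the failure of uniform convergence.
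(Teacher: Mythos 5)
Your proposal is correct and follows essentially the same route as the paper: both use high number states $\ket{m}$ with $m \gg n^2$ as witness vectors and reduce the claim to the scalar limit $\braket{m|A_n|m} = I_m(n^2) \to 0$, established by the same factorial bound $n^{2m}/m!$. Your use of the exact diagonal action $A_n\ket{k} = I_k(n^2)\ket{k}$ gives $\|(I-A_n)\ket{k}\| = 1 - I_k(n^2)$ directly and is marginally cleaner than the paper's expansion of $\|(A_n - I)\ket{m}\|^2$ (which discards the $\|A_n\ket{m}\|^2$ term and so only bounds the squared norm), but this is a cosmetic difference, not a different argument.
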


\begin{remark}
The relation in Eq.~\eqref{nonUniform000} obviously makes a contradiction to the statement of the uniform convergence:
\begin{align}
 (\forall\epsilon >0 )\exists N>0;(n>N) \quad\| A_n-I\|< \epsilon. 
\end{align} 
Hence, this theorem implies the operator sequence $(A_n)$ has no uniform limit.
\end{remark}

\begin{proof}
We can readly estimate
  \begin{align}
 \|A_n -I\|^2  \ge&\|(A_n-I)\ket{m}\|^2 \nonumber \\
 =&\|A_n \ket{m}\|^2+1 -2\braket{m|A_n|m } \nonumber \\
 >& 1 -2\braket{m|A_n|m } \label{20230321QnUniform}
.\end{align}
We would like to show that the expression in the last line approaches to $1$ when $m  \to \infty $. Using Eq.~\eqref{coState}, we have
\begin{align}
 \braket{m|A_n|m }=& \int _{|\alpha| \le n } \frac{|\alpha|^{2m}  e^{-|\alpha ^2|} }{m!} \frac{d^2 \alpha}{\pi} \nonumber \\
 =&\frac {2 }{m!}\int_0^n  {r^{2m+1} e^{-r^2}}\ dr \le \frac {2 n^{2m+2}}{m!},
\end{align} where the last inequality comes from $e^{-r^2} \le 1$ for $r\in [0,\infty)$.
This implies 
\begin{align}
 \braket{m|A_n|m }\to 0 \quad( m\to \infty). 
\end{align}
This relation with Eq.~\eqref{20230321QnUniform} concludes the statement of our theorem~\ref{TheoNonUni}. 
\end{proof}

\begin{remark} In the spectral theory, it is typical to consider the monotone family of projection operators. As we have mentioned in the introduction (Sect.~\ref{IntroSC}), the family of coherent states lives outside of such a family. The present theorem articulates   it is  the monotonicity and boundedness that leads to the strong convergence. 
\end{remark}

\begin{remark}
We may frequently use the resolution of the identity by the sequence due to an orthonormal base, 
\begin{align}
I = s\text{-}\lim_{n \to \infty } \sum_{k=0}^n \ket{k}\bra{k}. 
\end{align}
This sequence also fails to converge in the norm topology unless the dimension is finite. 
In fact, as a counterpart of Eq.~\eqref{nonUniform000}, it holds 
\begin{align}
\left \| \sum_{k=0}^n \ket{k}\bra{k} -I \right \| =1 \qquad ( n \in \mathbb N).
\end{align} Here, even a countable summation is unable to  shorten the norm distance from the identity operator. 
Because of this structure, the uniform limit is unavailable. So we expect the strong convergence, at most.
\end{remark}

\section{conclusion and remarks} \label{ClosingSECTION}
In quantum optics, an integral decomposition of the identity operator due to a family of coherent states is used as a standard theoretical tool. This  decomposition is often introduced  with the remark describing its weak convergence. This could mislead readers into interpreting that  the strong convergence is unavailable. If the strong convergence is available there is rather no point to mention its weak convergence. This is because the convergence  in the strong topology  automatically guarantees the convergence in the weak topology.  In this circumstance,  it would be worth exposing available convergence topologies for the resolution of the identity via the family of coherent states. 

In this article, we have proven  that the integral form  resolution of the identity due to canonical coherent states holds in the strong operator topology in a mostly familiar elementary setting (Sect.~\ref{SectBesicProof}). We have also given a different proof based on Klauder's lecture notes (Sect.~\ref{SectKlauderApproach}). This proof lifts up the weak convergence to the strong one by the monotonicity and boundedness in more abstract operator algebraic taste. We further have shown that the corresponding uniform limit is unavailable (Sect.~\ref{SecNoUniformLimit}). 

Therefore, one can safely use the integral-form operator decomposition Eq.~\eqref{coState}  in the strong sense similar to the spectral theorem, and claim that the operator-valued integral holds in the strong topology. In turn no uniform limit is available and the solo integral form is meaningless.  It should be mandatory to check  which convergence topology is available  together with a properly definition of the vector-valued integral when such an operator decomposition appears.

 The details of our proof could be much simpler and shorter if one admits dominated convergence theorem for the Bochner integral. We thus have prescribed some detail of the Bochner integral  in Appendix~\ref{B-integral}  so that  readers familiar with the Lebesgue integral would be almost immediate to prove dominated convergence theorem for the Bochner integrals.

We have refrained from mentioning the frame theory \cite{Christensen2001}. The set of canonical coherent states forms a tight frame and its strong convergence might be obvious. We hope this viewpoint also help us to spread the concise statement of the strong convergence. 

\appendix

\section{Bochner integral} \label{B-integral}
The Bochner integral can generally define a vector-valued or operator-valued integral in a complete normed space (Banach space). Here, we define a ${\cal H}$-valued integral and prove two basic theorems. 

Let $D$ be a compact domain in $\mathbb C$. 
We say a ${\cal H}$-valued function $\{\ket {f_\alpha}\}_{\alpha \in \mathbb C}$ is \textit{Bochner integrable} on $D$ if
there exists a sequence of ${\cal H}$-valued simple functions $(s_n)$  that satisfies 
\begin{align}
\lim_{n \to \infty } \int_D  \| s_n(\alpha) -\ket{f_\alpha} \|^2d^2\alpha =0 \label{BintegrabilityAA} 
. \end{align} 
In such a case, the integral of $\{\ket {f_\alpha}\}_{\alpha \in \mathbb C}$ is defined as 
\begin{align}
 \int_D \ket {f_\alpha} d^2 \alpha :=\lim_{n \to\infty} \int_D s_n(\alpha) d^2\alpha. 
\end{align}
We may write
\begin{align}
 \int_{\mathbb{C}} \ket {f_\alpha} d^2 \alpha :=\lim_{|D| \to\infty} \int_D \ket {f_\alpha} d^2\alpha ,
\end{align} whenever the right hand side exists. 

\begin{remark}
The relation in Eq.~\ref{BintegrabilityAA} implies  
\begin{align}
\lim_{n \to \infty } \| s_n(\alpha) -\ket{f_\alpha} \| =0 \quad \text{ a.e. } \alpha \in D .
\end{align}
\end{remark}

We would like to show the integral defined in such a way actually belongs to ${\cal H}$. It turns out the Bochner integrability of ${\cal H}$-valued functions coincides the square integrability of the norm of those functions. 
Similarly to the case of $L^2$ functions, the following theorem holds (See, e.g., Theorem V.5.1 in Ref.~\cite{FAYoshida}).

\begin{theorem}\label{NFBochner}
An ${\cal H}$-valued function $\ket {f_\alpha}$ is Bochner integrable on $D$  iff \begin{align}
 \int_D \| \ket {f_\alpha} \| d^2 \alpha < \infty.  
\end{align}
\end{theorem}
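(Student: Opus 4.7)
The plan is to split the biconditional into its two directions and establish each by the standard Bochner integration arguments.

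For the forward direction ($\Rightarrow$), I would start from the approximating simple functions $(s_n)$ supplied by Bochner integrability. Because $D$ has finite Lebesgue measure, the Cauchy--Schwarz inequality upgrades the paper's $L^2$-style approximation in Eq.~(\ref{BintegrabilityAA}) to an $L^1$-style one: $\int_D \|s_n(\alpha)-\ket{f_\alpha}\|\, d^2\alpha \to 0$. Each $s_n$ is a finite linear combination of indicator functions of measurable subsets of $D$, so $\int_D \|s_n(\alpha)\|\, d^2\alpha$ is a finite sum of nonnegative numbers multiplied by finite measures, and hence is finite. The triangle inequality $\|\ket{f_\alpha}\| \leq \|s_n(\alpha)-\ket{f_\alpha}\| + \|s_n(\alpha)\|$, integrated over $D$ at any fixed $n$, then forces $\int_D \|\ket{f_\alpha}\|\, d^2\alpha < \infty$.

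For the reverse direction ($\Leftarrow$), I would invoke the separability of $\mathcal{H}$. Fix a countable dense sequence $\{\xi_k\}\subset \mathcal{H}$ and, for each $n$, partition $D$ into the measurable cells on which the nearest element of $\{\xi_1,\ldots,\xi_n\}$ to $\ket{f_\alpha}$ is a prescribed $\xi_k$, further restricted to the subset where $\|\ket{f_\alpha}\| \leq n$. This classical Pettis--Bochner construction yields simple functions $s_n$ satisfying $s_n(\alpha) \to \ket{f_\alpha}$ pointwise almost everywhere on $D$, together with a pointwise domination $\|s_n(\alpha)-\ket{f_\alpha}\| \leq 2\|\ket{f_\alpha}\|$. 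The scalar dominated convergence theorem with integrable majorant $2\|\ket{f_\alpha}\|$ then delivers $\int_D \|s_n(\alpha)-\ket{f_\alpha}\|\, d^2\alpha \to 0$.

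The main obstacle is to reconcile the paper's $L^2$-form Bochner-integrability definition with the $L^1$-form of the hypothesis $\int_D \|\ket{f_\alpha}\|\, d^2\alpha < \infty$: a bare application of dominated convergence yields only $L^1$-convergence of $s_n$ to $f$, whereas Eq.~(\ref{BintegrabilityAA}) demands $L^2$-convergence. I would bridge this by a diagonal truncation: first replace $f_\alpha$ by its restriction to the level set $\{\|\ket{f_\alpha}\| \leq M\}$, which is bounded and therefore simultaneously in $L^1(D)$ and $L^2(D)$ on the finite-measure domain $D$; apply the simple-function construction to the truncation; and finally extract a diagonal subsequence in the indices $M,n$ so that both the truncation error and the simple-function approximation error vanish jointly in $L^2$-norm, producing a genuine Bochner approximating sequence in the sense of Eq.~(\ref{BintegrabilityAA}).
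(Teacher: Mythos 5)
Your forward direction is sound and takes a slightly different route from the paper's: you downgrade the $L^2$-type approximation in Eq.~\eqref{BintegrabilityAA} to an $L^1$ bound via Cauchy--Schwarz on the finite-measure domain and conclude $\int_D\|\ket{f_\alpha}\|\,d^2\alpha<\infty$, whereas the paper stays in $L^2$ and gets the stronger conclusion $\int_D\|\ket{f_\alpha}\|^2\,d^2\alpha<\infty$ directly from $\|f\|^2\le 2\|f-s_n\|^2+2\|s_n\|^2$. Either version settles that half.

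The reverse direction has a genuine gap. You correctly flag the mismatch between the $L^1$ hypothesis and the $L^2$ convergence demanded by Eq.~\eqref{BintegrabilityAA}, but the proposed bridge does not close it. Writing $f^{(M)}$ for the restriction of $f$ to the level set $\{\|\ket{f_\alpha}\|\le M\}$, the $L^2$ truncation error is
\begin{align}
\int_D\bigl\|\ket{f_\alpha}-f^{(M)}(\alpha)\bigr\|^2\,d^2\alpha=\int_{\{\|\ket{f_\alpha}\|>M\}}\|\ket{f_\alpha}\|^2\,d^2\alpha,
\end{align}
and under the $L^1$ hypothesis alone this can be $+\infty$ for every $M$: take $\ket{f_\alpha}=|\alpha|^{-3/2}\ket{0}$ on a disk about the origin, which is norm-integrable but not norm-square-integrable there. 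No diagonal subsequence can make a quantity that is identically infinite vanish, and indeed the ``if'' direction as literally stated is false against the paper's $L^2$-style definition: a function in $L^1\setminus L^2$ cannot be $L^2$-approximated by simple functions, by your own forward argument. The resolution is that the theorem statement is missing a square --- Remark~\ref{remark20} and the paper's own proof both take the hypothesis to be $\int_D\|\ket{f_\alpha}\|^2\,d^2\alpha<\infty$. Under that hypothesis your separability construction succeeds without truncating $f$ at all; the paper instead truncates the \emph{simple functions}, setting $t_n=s_n$ where $\|s_n\|\le 2\|\ket{f_\alpha}\|$ and $t_n=0$ elsewhere, so that $\|f-t_n\|^2\le 9\|f\|^2$ supplies an integrable majorant and scalar dominated convergence gives $\int_D\|f-t_n\|^2\,d^2\alpha\to 0$ in one step.
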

\begin{proof}
Let  us assume assume $\int_D \| \ket {f_\alpha} \|^2 d^2 \alpha < \infty$. We have skipped details but the functions should be assumed measurable. So, we  can take a sequence of simple functions $(s_n)$ that satisfies 
\begin{align} \lim _{n \to \infty }\| \ket{ f_ \alpha} - s_n (\alpha ) \| = 0 , \quad a.e.,  \ \alpha \in D.
 \end{align}
 In what follows we may drop to write the condition ``$(a.e. ),\  \dots $ '' as it is irrelevant to the main points of the proof. 

Let us define another sequence of simple functions as 
\begin{align}t_n (\alpha ) =  \begin{cases}  s_n (\alpha )& \|s_n (\alpha) \| \le 2\| \ket{f_\alpha }  \| \\  0 & \text(otherwise) \end{cases} .
\end{align}
This sequence satisfies 
\begin{align}
 \lim_{n \to \infty } \| t_n (\alpha )- s_n (\alpha ) \| = 0   \quad a.e. \ \alpha \in D, \\  
  \lim_{n \to \infty } \| t_n (\alpha ) -  \ket{ f_ \alpha} \| = 0  \quad a.e. \ \alpha \in D.  \label{aeZerotf}
 \end{align}
From  the triangle inequality and  the construction of $(t_n)$, it holds that   $\|f  - t_n  \| \le \|f \| + \|t_n \| \le 3 \|f \| $. We thus have 
\begin{align}
\| \ket{ f_ \alpha}   - t_n (\alpha ) \| ^2 \le 9 \| \ket{ f_ \alpha}  \|^2.   
\end{align} This implies $\| f -t_n\|$ is square-integrable because $ \| f\|^2 $  is square-integrable. We thus can use the dominated convergence theorem to obtain  
\begin{align}
&  \lim_ { n \to\infty }  \int \|\ket{f_\alpha }  - t_n (\alpha ) \| ^2d^2 \alpha   \nonumber \\ = &      \int  \lim_ { n \to \infty } \|\ket{f_\alpha }  - t_n (\alpha ) \| ^2d^2 \alpha   =  0, 
\end{align}
where the final equality is  due to the relation in Eq.~\ref{aeZerotf}. 
Since $(t_n)$ plays the role of $(s_n)$ in \ref{BintegrabilityAA}, the "if" part of the theorem statement is proven.

Let us move on to the ``only if'' part. 
From the triangle inequality $\|f \| \le \|f -s_n \|  +\|s_n \| $, we have 
\begin{align}
\|f \|^2 \le & \|f -s_n \|^2  +\|s_n \| ^2 + 2 \| f-s_n \|\| s_n \|  \nonumber \\
\le & 2 \|f -s_n \|^2  +2 \|s_n \| ^2 .  \end{align}
This implies 
\begin{align}
\int \|f \|^2  d^2 \alpha 
\le & 2 \int  \|f -s_n \|^2 d^2 \alpha  +2  \int \|s_n \| ^2 d^2 \alpha . \end{align}
Due to the condition in Eq.~\ref{BintegrabilityAA}, the term $\int \|f -s_n \|^2 d^2 \alpha   $ is a convergent  sequence and bounded. Boundedness of the last term  $\int \|s_n \| ^2 d^2 \alpha$ is guaranteed because it is an integral of a simple function defined over the compact domain $ D$. This proves
$\int \|f \|^2  d^2 \alpha < \infty  $.
\end{proof}

We may often use the triangle inequality for the vector-valued integrals.
\begin{theorem} \label{sankakuIntTheo} It holds 
\begin{align}\left \| \int  \ket { f_ \alpha  }   d^2 \alpha  \right \| \le \int \|  \ket {f_\alpha} \| d^2 \alpha  ,
\end{align} whenever both sides exist. 
\end{theorem}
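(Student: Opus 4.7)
The plan is to reduce the inequality to the case of simple functions and then pass to the limit using the defining approximation of the Bochner integral. First I would verify the inequality for an $\mathcal{H}$-valued simple function $s(\alpha)=\sum_{i}\ket{v_i}\chi_{E_i}(\alpha)$ on a compact $D\subset\mathbb{C}$: its integral is just the finite sum $\sum_i |E_i|\ket{v_i}$, and the ordinary triangle inequality in $\mathcal{H}$ immediately yields $\bigl\|\int_D s(\alpha)\,d^2\alpha\bigr\|\le\int_D\|s(\alpha)\|\,d^2\alpha$.

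Next, for a general Bochner integrable $\ket{f_\alpha}$ on $D$, I would pick the approximating sequence $(s_n)$ provided by Eq.~\eqref{BintegrabilityAA} and pass to the limit in the simple-function inequality. The left-hand side converges to $\bigl\|\int_D\ket{f_\alpha}\,d^2\alpha\bigr\|$ by the very definition of the Bochner integral combined with continuity of the norm. For the right-hand side, the reverse triangle inequality gives $\bigl|\,\|s_n(\alpha)\|-\|\ket{f_\alpha}\|\,\bigr|\le\|s_n(\alpha)-\ket{f_\alpha}\|$, and Cauchy-Schwarz on the finite-measure set $D$ yields $\int_D\bigl|\,\|s_n\|-\|\ket{f_\alpha}\|\,\bigr|\,d^2\alpha\le|D|^{1/2}\bigl(\int_D\|s_n(\alpha)-\ket{f_\alpha}\|^2\,d^2\alpha\bigr)^{1/2}$, which tends to zero by Eq.~\eqref{BintegrabilityAA}. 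Consequently $\int_D\|s_n(\alpha)\|\,d^2\alpha\to\int_D\|\ket{f_\alpha}\|\,d^2\alpha$, and the inequality transfers to the limit on $D$.

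Finally, to extend to the whole plane I would take a compact exhaustion $D_k\uparrow\mathbb{C}$ and apply the inequality just established on each $D_k$. The left-hand side tends to $\bigl\|\int_{\mathbb{C}}\ket{f_\alpha}\,d^2\alpha\bigr\|$ by the defining relation $\int_{\mathbb{C}}:=\lim_{|D|\to\infty}\int_D$ together with continuity of the norm, while the right-hand side $\int_{D_k}\|\ket{f_\alpha}\|\,d^2\alpha$ is monotone in $k$ and converges to the finite quantity $\int_{\mathbb{C}}\|\ket{f_\alpha}\|\,d^2\alpha$ assumed to exist by hypothesis.

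The main obstacle I anticipate is not a single algebraic manipulation but rather reconciling the $L^2$-type convergence $\int_D\|s_n-\ket{f_\alpha}\|^2\,d^2\alpha\to 0$ built into Eq.~\eqref{BintegrabilityAA} with the $L^1$-type convergence of $\|s_n\|$ toward $\|\ket{f_\alpha}\|$ needed on the right-hand side. On compact pieces this is handled cleanly by Cauchy-Schwarz since $|D|<\infty$, but it is the step where one might otherwise slip by invoking dominated convergence without first securing the sort of majorant constructed in the proof of Theorem~\ref{NFBochner}.
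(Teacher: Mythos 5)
Your proof is correct, but it follows a genuinely different route from the paper's. The paper argues by duality: assuming the integral is nonzero, it normalizes it to a unit vector $e_f$, writes $\left\|\int\ket{f_\alpha}d^2\alpha\right\| = \bra{e_f}\int\ket{f_\alpha}d^2\alpha = \int\braket{e_f|f_\alpha}d^2\alpha$, and bounds the scalar integrand by $\sup_{\|\varphi\|\le 1}\braket{\varphi|f_\alpha} = \|\ket{f_\alpha}\|$. That argument is two lines long, but it quietly leans on the nontrivial fact that a bra may be pulled inside the Bochner integral, which the paper must justify separately (Theorem~\ref{Theo10000} and its corollary in Appendix~\ref{BraAction}, themselves proved by a Riemann-sum or simple-function approximation). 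Your route works directly from the definition in Eq.~\eqref{BintegrabilityAA}: verify the inequality for simple functions by the ordinary finite triangle inequality, pass to the limit, and exhaust $\mathbb{C}$ by compact sets. This is longer but more self-contained, avoids the case split on whether the integral vanishes, and makes explicit the one genuine measure-theoretic point — that the $L^2$-type convergence built into the definition implies the $L^1$-type convergence $\int_D\|s_n\|\,d^2\alpha \to \int_D\|\ket{f_\alpha}\|\,d^2\alpha$ needed on the right-hand side, which you correctly dispatch with the reverse triangle inequality and Cauchy--Schwarz on the finite-measure domain. The one thing the paper's duality proof buys that yours does not is that it applies verbatim on any domain where both integrals exist without an explicit exhaustion step; the one thing yours buys is independence from the bra-exchange lemma, so the two results do not have to be proved in a particular order.
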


\begin{proof}
If $ \int \ket{f_\alpha } d^2\alpha = 0$ it is obviously true. Let us suppose $\int \ket{f_\alpha } d^2 \alpha  \neq 0$.
 Let us define a unit vector 
 as 
\begin{align}
e_f := \frac{\int \ket{f_\alpha } d^2 \alpha }{\| \int  \ket{f_\alpha } d^2 \alpha  \| }.
\end{align}  
We can see that 
\begin{align}
\left \| \int  \ket{f_\alpha } d^2 \alpha   \right \| 
= & \bra {e_f} \cdot  \left (  \int \ket{ f_\alpha  } d^2 \alpha  \right )   \nonumber \\
= & \int \braket{e_f  | f_\alpha  } d^2 \alpha    \nonumber \\
\le & \int \sup _{\| \varphi \| \le 1 } \braket{\varphi | f_\alpha  } d^2 \alpha 
        =  \int \|\ket{f _\alpha }\| d^2 \alpha 
\end{align}
\end{proof}

\begin{remark}
The Bochner integrability of $\ket{f_\alpha}$ does not necessary guarantee the integrability of $\| \ket {f_\alpha}\|$. 
The square integrability of $\| \ket {f_\alpha}\|$ implies its integrability when the  integration volume is finite. In fact, Schwartz inequality helps us to obtain  
\begin{align}
\int_D \|  \ket {f_\alpha} \| d^2 \alpha  \le & \nonumber \sqrt {\int_D d^2 \alpha } \sqrt{\int_D \left \| \ket { f_ \alpha  }  \right \|^2  d^2 \alpha | } \\
= &  {|D|}^{1/2} \sqrt{\int _D\left \| \ket { f_ \alpha  }  \right \|^2  d^2 \alpha |}  < \infty ,
\end{align} where $D$ is assumed to be a compact region on $\mathbb C$.
\end{remark}

\section{elementary integration} \label{AppIntegral1}

Let us define 
\begin{align}
 I_{n}( R) :=\int ^{R}_{0}\dfrac{y^{n}e^{-y}}{n!}dy, \qquad ( n=0,1,2,\dots ).
\end{align}
As the integrand is positive, it holds $I_{n}\left( R\right) \geq 0 $ for $R > 0$. 
Integration by parts yeilds  
\begin{align}
I_{n}\left( R\right) =I_{n-1}\left( R\right) -\dfrac{R^{n}e^{-R}}{n!} \label{tag2}.
\end{align}
This relation leads to 
\begin{align}
I_{n}\left( R\right) \le I_{n-1}\left( R\right) \label{tag3}. 
\end{align}
We can readily confirm
\begin{align}
I_{0}\left( R\right) = \int_0^R e^{-y}dy = 1-e^{-R}\le 1 \label{tag4}.
\end{align}
Let $\chi_{[0,R]}$ be the characteristic function on the interval $[0,R]$. 
Applying the  monotone convergence theorem to the sequence of functions $f_m(y) =e^{-y } \chi_{[0,m]} (y) $, we have  
\begin{align}
 \int_{\mathbb R} \lim_{m \to\infty}  f_{m}  (y)dy  =  \lim_{R\to \infty}I_{0}\left( R\right) = \lim_{R\to \infty}\left(  1- e^{-R} \right )  =1. \label{Qn221025A5}
\end{align}

Now, let us define 
\begin{align}
 f_m^{(n)}(y) =(n!)^{-1}  y^n e^{-y } \chi_{[0,m]} (y). 
\end{align} 
For $n=1$, from Eq.~\eqref{tag2} and   
the monotone convergence theorem, we obtain 
\begin{align}
 \int_{\mathbb R} \lim_{m \to\infty}  f_{m}^{(1)} (y)dy =&   \lim _{R \to \infty}I_{1}  \left( R\right) \nonumber \\
  = & \lim _{R \to \infty} \left ( I_{0}  \left( R\right) - R e^{-R} \right)   =1. 
\end{align}
Repeating this process for $n =2,3,4, \dots$, we obtain 
\begin{align}
 \int_{\mathbb R} \lim_{m \to\infty}  f_{m}^{(n)} (y)dy =&   \lim _{R \to \infty}I_{n}  \left( R\right) \nonumber \\
  = & \lim _{R \to \infty} \left ( I_{n-1}  \left( R\right) - \frac{ R^n e^{-R}}{n!} \right)   =1. 
\end{align}
Note that Eqs.~\eqref{tag3}~and~\eqref{tag4} readily imply  
\begin{align}
 |1- I_n (R)|<1 \label{tag5}, \quad ( R \ge 0). 
\end{align}

\section{detail of calculation} \label{somewhatlengthy}
Here, we show the following relation:
\begin{align} 
\int_{D(R)} \left| \alpha \rangle \langle \alpha \right| \varphi \rangle d^{2}\alpha =    \pi \sum ^{\infty }_{n=0}I_n(R^2) \varphi_n \ket{n}. \label{Qn22B1}
 \end{align}
 
 Let be $D(r) =\set{\alpha\in \mathbb C | \left |\alpha\right|\le r}$. The number state expansion of $\ket {\alpha}$ in Eq.~\eqref{coState} implies 
\begin{align} 
 \int_{D(R)} \left| \alpha \rangle \langle \alpha \right| \varphi \rangle d^{2}\alpha =&\int_{D(R)}  e^{-|\alpha |^2/2} \sum ^{\infty }_{n=0}\dfrac{\alpha^n }{\sqrt{n!}} | n \rangle \langle \alpha | \varphi \rangle d^{2}\alpha \nonumber \\
 = & \sum ^{\infty }_{n=0}\left(  \int_{D(R)} e^{-|\alpha |^2/2} \dfrac{\alpha ^{n}}{\sqrt{n!}}\langle \alpha | \varphi \rangle d^{2}\alpha \right) | n\rangle, \label{Qn22B2}
 \end{align}
where in the last line we use Theorem~\ref{Th4G} in Appendix~\ref{APPth4G} to exchange the order of integration and summation for  $\cal H $-valued terms  (Note that the assumptions of Theorem~\ref{Th4G} are fulfilled as $|\braket{\alpha|\varphi}|$ is uniformly bounded). 

Now, let us consider the following integration: 
\begin{align}
\nonumber  & \int_{D(R)} e^{-|\alpha |^2/2} \dfrac{\alpha ^{n}}{\sqrt{n!}}\langle \alpha | \varphi \rangle d^{2}\alpha \\ 
= & \int _{|\alpha| \le R} \left( e^{-\left| \alpha \right| ^{2}}\dfrac{\alpha ^{n}}{\sqrt{n!}}\sum ^{\infty }_{m=0}\dfrac{\left( \alpha ^{\ast }\right) ^{m}\varphi _{m}}{\sqrt{m!}}\right) d^{2}\alpha  .\label{Qn22B3}
\end{align}
Using Schwartz's inequality, we can show the power series is uniformly bounded as
\begin{align} 
\left | \sum ^{ N}_{m=0} \dfrac{\left( \alpha ^{\ast }\right) ^{m}\varphi _{m}}{\sqrt{m!}}  \right | \le & \left( \sum_{m=0}^{N}\frac{|\alpha |^{2m}}{m!} \right)^{1/2} \left( \sum_{m=0}^{N} |\varphi_m |^2 \right)^{1/2}  \nonumber \\
\le &  e^{|\alpha |^2/2} \|\varphi \|  \le    e^{R^2/2} \|\varphi \| .
\end{align}
Hence, the integrand is a uniform limit of a sequence of continuous functions. 
This allows us to 
exchange the order of the integration and the summation in the second expression of Eq. \ref{Qn22B3}. We thus obtain    
\begin{align}
 & \int _{|\alpha| \le R} \left( e^{-\left| \alpha \right| ^{2}}\dfrac{\alpha ^{n}}{\sqrt{n!}}\sum ^{\infty }_{m=0}\dfrac{\left( \alpha ^{\ast }\right) ^{m}\varphi _{m}}{\sqrt{m!}}\right) d^{2}\alpha   \nonumber \\
 = & \sum ^{\infty }_{m=0}\left(  \dfrac{\varphi _{m}}{\sqrt{n! m!}} \int _{|\alpha| \le R}   e^{-\left| \alpha \right| ^{2}} \alpha ^{n}   \left(\alpha ^{\ast }\right) ^{m}
  d^{2}\alpha \right) \nonumber \\
 =&  \sum ^{\infty }_{m=0}\left(  \dfrac{\varphi _{m}}{\sqrt{n! m!}} \int _0  ^ R   e^{-r ^{2}}  r^{n+m} rdr  \cdot \underbrace{\int_0^{2 \pi} e^{i(n-m) \phi} d\phi}_{2 \pi \delta_{n,m}}
 \right)  \nonumber \\
 =&    \sum ^{\infty }_{m=0}\left(  \dfrac{\varphi _{m}}{\sqrt{n! m!}} \int _0  ^ R   e^{-r ^{2}}  r^{n+m} rdr  \ 2 \pi \delta_{n,m}
 \right) \nonumber \\
=&   \pi  \dfrac{\varphi _{n}}{n!} \int _0  ^ {R^2}   e^{-y}  y^n dy 
 =  \pi  I_n(R^2) 
   \varphi _{n}, \label{Qn22B9}
\end{align}
where  
an integration in the polar coordinate system was carried out with $\alpha = r e^{i\phi}$. 
 Concatenating  Eqs.~\ref{Qn22B2}, \ref{Qn22B3}, and \ref{Qn22B9}, we find the relation in Eq.~\ref{Qn22B1}. 

\section{action of bra} \label{BraAction}
One may not quite sure if the action of bra can be put into the ${\cal H}$-valued integral
 as
 \begin{align}
 \bra{f} \int  \ket{g_\alpha}  d^2\alpha =  \int  \braket{f|g_\alpha}  d^2\alpha. 
\end{align}
We can show that the following theorem holds: 
\begin{theorem}\label{Theo10000}
Let be $ f , \psi_\alpha \in {\cal H} \ (\alpha \in \mathbb C)$ and let us assume 
 \begin{align}
 \int_{\mathbb C}\ket{\psi_\alpha }d^2 \alpha \in {\cal H}, \qquad  \int_{\mathbb C}\braket{f|\psi_\alpha }d^2 \alpha \in \mathbb C .  \label{Qn2210240}
\end{align}
Then, it holds 
\begin{align}
 \bra f \int_{\mathbb C}\ket{\psi_\alpha }d^2 \alpha = \int_{\mathbb C}\braket{f|\psi_\alpha }d^2 \alpha. 
\end{align}
\end{theorem}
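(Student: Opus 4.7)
The plan is to exploit the fact that the bra $\bra{f}$ is a bounded linear functional on $\cal H$ with operator norm $\|f\|$, and that bounded linear functionals commute with Bochner integrals. Under the hypotheses of the theorem both sides of the identity already exist, so the task reduces to identifying their values.

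First I would reduce to a compact domain $D \subset \mathbb C$. Since $\int_{\mathbb C}$ is defined as $\lim_{|D|\to\infty}\int_D$ and both limits in the statement exist by assumption, it suffices to establish the identity $\bra f \int_D \ket{\psi_\alpha} d^2\alpha = \int_D \braket{f|\psi_\alpha} d^2\alpha$ on every compact $D$, and then pass to the limit $|D|\to\infty$. On $D$ I would invoke the definition of the Bochner integral: pick a sequence of simple functions $(s_n)$ with $\int_D \|s_n(\alpha) - \ket{\psi_\alpha}\|^2 d^2\alpha \to 0$, so that $\int_D s_n(\alpha)\, d^2\alpha \to \int_D \ket{\psi_\alpha} d^2\alpha$ in $\cal H$ by definition. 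For a simple function $s_n = \sum_k \chi_{E_k} \ket{v_k}$ the identity $\bra f \int_D s_n\, d^2\alpha = \sum_k |E_k|\braket{f|v_k} = \int_D \braket{f|s_n(\alpha)} d^2\alpha$ is immediate from finite linearity of the inner product.

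To conclude I would pass to the limit on both sides. The left side converges to $\bra f \int_D \ket{\psi_\alpha} d^2\alpha$ by continuity of the bounded linear functional $\bra f$. For the right side, Cauchy-Schwarz in $\cal H$ followed by Cauchy-Schwarz in $L^2(D)$ yields $|\int_D \braket{f|s_n - \psi_\alpha} d^2\alpha| \le \|f\|\int_D \|s_n(\alpha) - \ket{\psi_\alpha}\| d^2\alpha \le \|f\|\,|D|^{1/2}\bigl(\int_D \|s_n(\alpha) - \ket{\psi_\alpha}\|^2 d^2\alpha\bigr)^{1/2} \to 0$, exactly as in the remark following Theorem~\ref{sankakuIntTheo}. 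Comparing the two limits identifies the required equality on $D$, and the limit $|D|\to\infty$ then gives the full statement.

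The main obstacle is bridging the $L^2$-type Bochner convergence that defines the vector integral to the $L^1$-type convergence of the scalar integrand needed on the right, but Cauchy-Schwarz on the finite-measure domain $D$ handles this in a single step. A minor subtlety is measurability of $\alpha \mapsto \braket{f|\psi_\alpha}$, which follows because it is a pointwise a.e.\ (subsequential) limit of the simple scalar functions $\braket{f|s_n(\alpha)}$.
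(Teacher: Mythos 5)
Your proof is correct, and it actually establishes more than the paper writes out. The paper proves only the corollary of Theorem~\ref{Theo10000} in which $\ket{\psi_\alpha}$ is assumed norm-continuous, so that the integral over the compact disk $D(r)$ can be replaced by a Riemann sum over a partition of small mesh; the exchange of $\bra f$ with a finite Riemann sum is exact, and the conclusion is assembled from a chain of five triangle-inequality terms each controlled by $\epsilon/4$. The general statement is only gestured at (``one can remove the continuity assumption by replacing the Riemann-sum argument with an argument based on simple functions''), and that replacement is precisely what you carry out: you work from the paper's definition of the Bochner integral on a compact domain, exchange $\bra f$ with the integral of a simple function by finite linearity, and pass to the limit using continuity of the bounded functional $\bra f$ on the left and the Cauchy--Schwarz estimate $\|f\|\,|D|^{1/2}\bigl(\int_D\|s_n(\alpha)-\ket{\psi_\alpha}\|^2 d^2\alpha\bigr)^{1/2}\to 0$ on the right. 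Your route buys the theorem in full generality and replaces the $\epsilon/4$ bookkeeping with two clean limit passages; the paper's Riemann-sum route buys independence from the Bochner machinery, consistent with its stated aim of keeping the argument elementary. The reduction to a compact domain and the final passage $|D|\to\infty$ are identical in both. Your closing remarks on the $L^2$-to-$L^1$ bridge and on measurability of $\alpha\mapsto\braket{f|\psi_\alpha}$ identify the right subtleties; for completeness you could add that the integrand on the right is actually integrable over $D$ because $|\braket{f|\psi_\alpha}|\le\|f\|\,\|\ket{\psi_\alpha}\|$ and $\|\ket{\psi_\alpha}\|$ is square-integrable on the finite-measure domain $D$ by Theorem~\ref{NFBochner}.
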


Here, we prove a corollary of Theorem~\ref{Theo10000}, in which  $\ket{\psi_\alpha}$ is norm-continuous with respect to $\alpha$, namely, $\|\ket{\psi_\alpha} -\ket{\psi_\beta }  \| \to 0  \quad (\|\alpha -\beta \| \to 0)$,  so that the integration can be well-approximated with Riemann sums. To prove Theorem~\ref{Theo10000}, one can remove the continuity assumption by replacing the Riemann-sum argument with an argument based on simple functions. 
\begin{corollary}
Let $\ket{\psi_\alpha}$ be norm-continuous with respect to $\alpha$, namely, $\|\ket{\psi_\alpha} -\ket{\psi_\beta }  \| \to 0  \quad (\|\alpha -\beta \| \to 0)$. Let us assume 
 \begin{align}
 \int_{\mathbb C}\ket{\psi_\alpha }d^2 \alpha \in {\cal H}, \qquad  \int_{\mathbb C}\braket{f|\psi_\alpha }d^2 \alpha \in \mathbb C .  \label{Qn221024}
\end{align}
  Then, it holds 
\begin{align}
 \bra f \int_{\mathbb C}\ket{\psi_\alpha }d^2 \alpha = \int_{\mathbb C}\braket{f|\psi_\alpha }d^2 \alpha. 
\end{align}
\end{corollary}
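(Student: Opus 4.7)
The plan is to reduce the claim to Riemann sums on compact disks $D(R) = \{\alpha \in \mathbb{C} : |\alpha| \le R\}$, use the fact that a bra trivially passes through a finite sum, and then let $R \to \infty$. The norm-continuity hypothesis is exactly what justifies the Riemann-sum approximation in $\cal H$; this is the same prescription invoked in Remark~\ref{remark20} for the coherent-state integrand.

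First I would fix $R > 0$, partition $D(R)$ into measurable cells $(C_j)_{j=1}^N$ with sample points $\alpha_j \in C_j$ and areas $|C_j|$, and form the vector-valued Riemann sum
\begin{align*}
S_N := \sum_{j=1}^N \ket{\psi_{\alpha_j}}\,|C_j|.
\end{align*}
Since $D(R)$ is compact and $\alpha \mapsto \ket{\psi_\alpha}$ is norm-continuous, it is uniformly continuous on $D(R)$. Hence, as the mesh size tends to $0$, $S_N$ converges in the norm of $\cal H$ to $\int_{D(R)} \ket{\psi_\alpha}\,d^2\alpha$. By Schwarz's inequality, the scalar function $\alpha \mapsto \braket{f|\psi_\alpha}$ is also continuous on $D(R)$, so its ordinary Riemann sums $\sum_{j=1}^N \braket{f|\psi_{\alpha_j}}\,|C_j|$ converge to $\int_{D(R)} \braket{f|\psi_\alpha}\,d^2\alpha$.

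Next I would apply $\bra{f}$ to $S_N$. By linearity of the inner product on a finite sum,
\begin{align*}
\bra{f}\,S_N = \sum_{j=1}^N \braket{f|\psi_{\alpha_j}}\,|C_j|.
\end{align*}
Continuity of the bounded linear functional $\bra{f}$ on $\cal H$ (which follows from Schwarz) lets me pass $\bra{f}$ through the $\cal H$-valued limit of $S_N$, so combining the two Riemann-sum convergences yields
\begin{align*}
\bra{f}\int_{D(R)} \ket{\psi_\alpha}\,d^2\alpha = \int_{D(R)} \braket{f|\psi_\alpha}\,d^2\alpha
\end{align*}
for every $R > 0$. Finally, the paper defines $\int_{\mathbb{C}} = \lim_{|D|\to\infty}\int_D$; by the first assumption in Eq.~\eqref{Qn221024} and continuity of $\bra{f}$, the left-hand side converges to $\bra{f}\int_{\mathbb{C}} \ket{\psi_\alpha}\,d^2\alpha$, while the second assumption gives convergence of the right-hand side to $\int_{\mathbb{C}} \braket{f|\psi_\alpha}\,d^2\alpha$. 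This proves the corollary.

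The main obstacle is the first step: justifying that the vector-valued Riemann sum $S_N$ converges in $\cal H$ to $\int_{D(R)} \ket{\psi_\alpha}\,d^2\alpha$. This is not entirely automatic, because the Bochner integral in Appendix~\ref{B-integral} is built from simple functions rather than Riemann partitions; one must invoke uniform continuity of $\alpha \mapsto \ket{\psi_\alpha}$ on the compact disk $D(R)$ (plus Theorem~\ref{sankakuIntTheo} to estimate $\|S_N - \int_{D(R)}\ket{\psi_\alpha}\,d^2\alpha\|$ by a supremum of pointwise oscillations times $|D(R)|$) to identify the Riemann-sum limit with the Bochner integral. As the corollary statement notes, this continuity hypothesis is precisely what allows the Riemann-sum reduction; removing it for Theorem~\ref{Theo10000} would require replacing Step~1 with a simple-function argument along the lines of the Bochner construction.
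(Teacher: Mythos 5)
Your proposal is correct and follows essentially the same route as the paper's proof: approximate the integral over a compact disk $D(R)$ by vector-valued Riemann sums (justified by the norm-continuity hypothesis), pass $\bra{f}$ through the finite sum and then through the norm limit via Schwarz's inequality, and finally let $R \to \infty$ using the two assumptions in Eq.~\eqref{Qn221024}. The paper merely organizes the same ingredients into a single $\epsilon/4$ triangle-inequality chain rather than first establishing the identity on each finite disk, and your closing remark about replacing Riemann sums with simple functions to drop the continuity hypothesis matches the paper's own comment preceding the corollary.
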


\begin{proof} Let be $D(r) =\set{\alpha\in \mathbb C | \left |\alpha\right|\le r}$. For a convention to write a Riemann sum, let  $\Delta = (\Delta_m)_{m=1}^M$ denote the partition of $D(r)$ such that 
\begin{align}
D(r) =\bigcup_{m=1 }^M\Delta_m ,\quad\mu(\Delta_i\cap\Delta_j)=0 \quad (i \neq j),  \label{Qn2210242}\\ |\Delta|:=\max_m \sup_{\alpha, \beta \in \Delta_m} |\alpha -\beta|.  \label{Qn2210243}
\end{align}

From the assumption in Eq.~\eqref{Qn221024}, for any $\epsilon >0 $ there exists $R > 0$ such that, for  $r \ge  R$, it holds 
\begin{align}
&  \left | \bra f \left ( \int_{\mathbb C}\ket{\psi_\alpha }d^2 \alpha - \int_{D(r)}\ket{\psi_\alpha }d^2 \alpha \right) \right | \nonumber \\
\le&  \|f \|  \quad \left \| \int_{\mathbb C}\ket{\psi_\alpha }d^2 \alpha - \int_{D(r)}\ket{\psi_\alpha }d^2 \alpha \right\| < \epsilon /4,   \end{align}
and   
 \begin{align}
& \left |   \int_{\mathbb C}\braket{f|\psi_\alpha }d^2 \alpha - \int_{D(r)}\braket{f|\psi_\alpha }d^2 \alpha  \right | < \epsilon /4.
\end{align}

Moreover, from the conditions 
\begin{align}
 \int_{D(r)}\ket{\psi_\alpha }d^2 \alpha \in {\cal H}, \qquad  \int_{D(r)}\braket{f|\psi_\alpha }d^2 \alpha \in \mathbb C, 
\end{align}
 there exists $\delta>0$ such that,  for any partition $ (\Delta_m)_{m=1}^M$ of $D(r)$ satisfying  
 Eq.~\eqref{Qn2210242} and $|\Delta | <\delta$, it holds 
\begin{align}
&  \left | \bra f \left ( \int_{D(r)}\ket{\psi_\alpha }d^2 \alpha  - \sum_m \ket{\psi_{\alpha_m}} \mu(\Delta_m) \right) \right | \nonumber \\
\le&  \|f \|  \quad \left \| \int_{D(r)}\ket{\psi_\alpha }d^2 \alpha  - \sum_m \ket{\psi_{\alpha_m}} \mu(\Delta_m) \right \| < \epsilon /4  
\end{align} and 
\begin{align}
 & \left |   \int_{D(r)}\braket{f|\psi_\alpha }d^2 \alpha   - \sum_m \braket{f|\psi_{\alpha_m}} \mu(\Delta_m)  \right | < \epsilon /4
\end{align} where $\alpha_m \in \Delta_m$. 

Now, noting  that for any finite sum, it is no problem to write 
\begin{align} 
  \bra f \left ( \sum_m \ket{\psi_{\alpha_m}} \mu(\Delta_m) \right) = \sum_m \braket{f|\psi_{\alpha_m}} \mu(\Delta_m)  , 
\end{align}
we can make a chain of triangle inequalities to show
\begin{align} 
&\left | \bra{f}\int_{\mathbb C} \ket{ \psi_\alpha} d^{2}\alpha - \int_{\mathbb C }\braket{f| \psi_\alpha }  d^{2}\alpha \right | \nonumber \\
 \le& \left | \bra{f}\int_{\mathbb C} \ket{ \psi_\alpha} d^{2}\alpha 
-\bra{f}\int_{D(r)} \ket{ \psi_\alpha} d^{2}\alpha \right | \nonumber \\
&+
\left | \bra f \left ( \int_{D(r)}\ket{\psi_\alpha }d^2 \alpha  - \sum_m \ket{\psi_{\alpha_m}} \mu(\Delta_m) \right) \right |
\nonumber \\
&+
\left | \bra f \left ( \sum_m \ket{\psi_{\alpha_m}} \mu(\Delta_m) \right) -\sum_m \braket{f|\psi_{\alpha_m}} \mu(\Delta_m)  \right |  \nonumber \\ 
&+\left |    \sum_m \braket{f|\psi_{\alpha_m}} \mu(\Delta_m) - \int_{D(r)}\braket{f|\psi_\alpha }d^2 \alpha  \right | \nonumber \\
&+
\left| \int_{D(r) }\braket{f| \psi_\alpha }  d^{2}\alpha  
- \int_{\mathbb C }\braket{f| \psi_\alpha }  d^{2}\alpha\right  | \nonumber \\
<&\epsilon/4 + \epsilon/4 + 0 +\epsilon/4 + \epsilon/4 = \epsilon \end{align}

\end{proof}

\section{order of integral and summation}\label{APPth4G}

\begin{theorem} \label{Th4G}
Let  $ {\cal H} = \ell ^2$ and $(\ket{n})_{n=0}^\infty $ be an orthonormal basis  on ${\cal H}$. Let be $D(r) =\set{\alpha\in \mathbb C | \left |\alpha\right|\le r}$ and $ (\varphi _{n})_{n =0}^ \infty $ is a squence of complex-valued functions on $D$ which fulfills  \\ 
\begin{align}
\text{(i) } && \sum ^{\infty }_{n=0} | \varphi _{n} ( \alpha ) |^2< \infty,  
\quad  a.e., \ \alpha \in D(r)  
 \\ 
\text{(ii)} &&  \int_{D(r)} \sum ^{\infty }_{n=0}  \left | \varphi _{n} ( \alpha ) \right |^2 d^2 \alpha < \infty 
\label{B1kana}
.  \end{align}
Then, it holds that \begin{align}
 \int_{D(r)} \left(  \sum ^{\infty }_{n=0}\varphi _{n}\left( \alpha \right) | n\rangle \right)  d^{2}\alpha  =\sum_{n=0} ^{\infty} \left(\int_{D(r)} \varphi _{n} \left( \alpha \right)  d^{2}\alpha \right) | n\rangle. \label{B22kana}
\end{align}
\end{theorem}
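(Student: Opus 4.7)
The plan is to reduce this to a standard truncation argument using the vector-valued triangle inequality (Theorem~\ref{sankakuIntTheo}) together with Lebesgue's dominated convergence theorem applied to scalar quantities. The key observation is that for any $\sum_n a_n \ket{n} \in {\cal H}$ one has $\| \sum_n a_n \ket{n}\|^2 = \sum_n |a_n|^2$, so the $\ell^2$-type hypotheses (i) and (ii) translate directly into norm statements on ${\cal H}$-valued integrands.

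First I would check that both sides of Eq.~\eqref{B22kana} are well defined. Setting $\ket{F_\alpha} := \sum_{n=0}^\infty \varphi_n(\alpha) \ket{n}$, hypothesis (i) guarantees $\ket{F_\alpha} \in {\cal H}$ a.e., and $\|\ket{F_\alpha}\|^2 = \sum_n |\varphi_n(\alpha)|^2$, so hypothesis (ii) says $\int_{D(r)} \|\ket{F_\alpha}\|^2 d^2\alpha < \infty$. By Theorem~\ref{NFBochner}, $\ket{F_\alpha}$ is Bochner integrable on $D(r)$. For each fixed $n$, $|\varphi_n(\alpha)|^2 \le \sum_k |\varphi_k(\alpha)|^2 \in L^1(D(r))$, and since $D(r)$ has finite Lebesgue measure, Cauchy--Schwarz yields $\varphi_n \in L^1(D(r))$, so each coefficient $c_n := \int_{D(r)} \varphi_n(\alpha) d^2\alpha$ exists. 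A further Cauchy--Schwarz estimate and Tonelli give $\sum_n |c_n|^2 \le |D(r)| \int_{D(r)} \sum_n |\varphi_n(\alpha)|^2 d^2\alpha < \infty$, so the right-hand side of Eq.~\eqref{B22kana} converges in ${\cal H}$.

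Next I would introduce the finite truncation $S_N(\alpha) := \sum_{n=0}^N \varphi_n(\alpha) \ket{n}$. Because the sum is finite, linearity of the Bochner integral gives $\int_{D(r)} S_N(\alpha) d^2\alpha = \sum_{n=0}^N c_n \ket{n}$, which is precisely the $N$-th partial sum of the right-hand side. It therefore suffices to prove norm convergence
\begin{align}
\left\| \int_{D(r)} \ket{F_\alpha} d^2\alpha - \int_{D(r)} S_N(\alpha) d^2\alpha \right\| \to 0 \qquad (N \to \infty).
\end{align}
Applying Theorem~\ref{sankakuIntTheo} and then Cauchy--Schwarz on the compact domain, this norm is bounded by
\begin{align}
\int_{D(r)} \|\ket{F_\alpha} - S_N(\alpha)\| d^2\alpha \le |D(r)|^{1/2} \left( \int_{D(r)} \|\ket{F_\alpha} - S_N(\alpha)\|^2 d^2\alpha \right)^{1/2}.
\end{align}

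The main step is to send $N \to \infty$ inside the last integral. Here $\|\ket{F_\alpha} - S_N(\alpha)\|^2 = \sum_{n>N} |\varphi_n(\alpha)|^2$, which tends to $0$ for a.e. $\alpha \in D(r)$ by hypothesis (i), and is dominated uniformly in $N$ by $\sum_n |\varphi_n(\alpha)|^2$, which lies in $L^1(D(r))$ by hypothesis (ii). The scalar dominated convergence theorem then yields the desired vanishing. The only genuine obstacle is the bookkeeping between the $\ell^2$ condition on $\varphi_n$ and the $L^1$ domination needed for DCT; once the identity $\|\ket{F_\alpha}\|^2 = \sum_n |\varphi_n(\alpha)|^2$ is noted, everything else is routine and the interchange in Eq.~\eqref{B22kana} follows.
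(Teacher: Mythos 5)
Your proposal is correct and follows essentially the same route as the paper's proof: establish Bochner integrability via Theorem~\ref{NFBochner}, truncate to $S_N$, control $\|\int (F_\alpha - S_N)\,d^2\alpha\|$ by the vector-valued triangle inequality (Theorem~\ref{sankakuIntTheo}) and Cauchy--Schwarz, and identify the finite partial sums on both sides. The only (immaterial) difference is that you kill the tail $\int_{D(r)}\sum_{n>N}|\varphi_n(\alpha)|^2\,d^2\alpha$ by dominated convergence, whereas the paper first applies monotone convergence to write $\int_{D(r)}\sum_n|\varphi_n|^2 = \sum_n\|\varphi_n\|_{L^2(D)}^2$ and then uses that the tail of this convergent series vanishes.
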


\begin{remark}
In order to verify the expression in Eq.~\ref{Qn22B2} within an elementary framework, one can use Theorem~\ref{Th4} instead of Theorem~\ref{Th4G}. Theorem~\ref{Th4} represents the case in which the sequence of the functions $(\varphi_n)$ is associated with a power series, i.e., $\varphi_n (\alpha ) =a_n \alpha^n $ . 
In such a case, 
the vector-valued integral can be defined as a limit of a Riemann sum similarly to Remark~\ref{remark20}. In this manner, one can work out our main theorem (Theorem \ref{Theorem10}) without concerning the notion of the Bochner integral as well as  that of  the Lebesgue integral.
\end{remark}



\begin{proof}
By the monotone convergence theorem for  $\left ( \sum_{n=0}^k |\varphi_n|^2 \right ) \in L^1[D] $
and the condition  (ii), we have  
\begin{align}
 \int_{D}   \sum _{n = 0} ^\infty \left| \varphi _{n}\left( \alpha \right)\right |^2  d^{2}\alpha  = & \sum  _{n= 0}^\infty \left(\int_{D } \left| \varphi _{n}\left( \alpha \right)\right |^2  d^{2}\alpha \right) \nonumber \\ 
 =&   \sum_{n=0}^\infty \| \varphi_n \|_{L^2[D]}^2 < \infty. \label{T4inQ1} 
\end{align} 
This implies 
\begin{align}
 \int \left \| \sum \varphi_n (\alpha ) \ket n \right \|^2  d^2 \alpha = & \int { \sum\left | \varphi_n (\alpha )\right |^2 }  d^2 \alpha 
  < \infty. \label{QN20230123a}
\end{align}
 Since the Bochner integrability is fulfilled due to Theorem~\ref{NFBochner},  the integrated state vector in the following form exists,
\begin{align}
 \phi :=& \int_{D}   \sum _{n = 0} ^\infty   \varphi _{n}\left( \alpha \right) \ket{n} \ d^{2} \alpha. 
\end{align}
Thereby, integrals of truncated states in the following form exist, 
\begin{align}
 \phi^{(N)} :=& \int_{D}   \sum _{n = 0} ^N  \varphi _{n}\left( \alpha \right) \ket{n} \ d^{2} \alpha. 
 \end{align}
Since the series $\sum \left \| \varphi_n \right \|_{L^2[D]}^2  $ is convergent as shown in Eq.~\eqref{T4inQ1}, we can show  
\begin{align}
 \| \phi - \phi^{(N-1)}  \|   =& \left\| \int_{D}   \sum _{n = N} ^\infty   \varphi _{n}\left( \alpha \right) \ket{n} \ d^{2} \alpha  \right\| \nonumber \\
  \le& \int_{D}  \left \|   \sum _{n = N} ^\infty   \varphi _{n}\left( \alpha \right) \ket{n} \right\| d^{2} \alpha \nonumber \\
\le &   |D|^{1/2} \sqrt{   \sum_{n=N}^\infty  \left \| \varphi_n \right \|_{L^2[D]}^2  } \to 0  \quad (N \to \infty). \label{Qn20230123b}
\end{align}  where we use the triangle inequality (Theorem~\ref{sankakuIntTheo}) and Schwartz's inequality.

Now, let us define 
\begin{align}
 \mathbb C \ni a_n :=  \left(\int_{D }  \varphi _{n}\left( \alpha \right)  d^{2}\alpha \right). 
\end{align}
The sequence $(a_n)$ is square-summable as one can show  
\begin{align}
 |a_n| =&  \left|\int_{D }  \varphi _{n}\left( \alpha \right)  d^{2}\alpha \right| 
\le\int_{D } \left| \varphi _{n}\left( \alpha \right) \right|  d^{2}\alpha \nonumber  \\
 \le&  \sqrt{\int_{D } d^{2}\alpha}  \sqrt{\int_{D } \left| \varphi _{n}\left( \alpha \right) \right|^2  d^{2}\alpha} \le |D|^{1/2}  \ \| \varphi_n\|_{L^2 (D)} \nonumber 
\end{align} and 
\begin{align}
 \sum_n|a_n|^2  \le & |D|\sum_n   \| \varphi_n\|_{L^2 (D)}^2 < \infty  
\end{align} where we use Eq.~\eqref{T4inQ1} in the final inequality. 
Therefore, the state vector in the form of  
\begin{align}
 \psi^{(N)} =\sum_{n=0}^N a_n \ket{n} 
\end{align} defines a Cauchy sequence $(\psi ^{(N)})$ in ${\cal H }$, and its   unique limit $\psi \in {\cal H}$ is well-defined due to the completeness of ${\cal H}$. It thus holds  
\begin{align}
 \|\psi - \psi^{(N)}\| \to 0 \qquad (N \to \infty). \label{Qn20230123c}
\end{align}
Since $\| \phi^{(N)} - \psi^{(N)} \| =0$ for $N \in \mathbb N$,  Eqs.~\eqref{Qn20230123b}~and~\eqref{Qn20230123c} yield 
\begin{align}
 \left \| \phi - \psi \right\| = 0 , 
\end{align} namely, it holds 
\begin{align}
 \int_{D}   \left(  \sum _{n = 0} ^\infty   \varphi _{n}\left( \alpha \right) \ket{n} \right) d^{2} \alpha  = \sum  _{n= 0}^\infty \left(\int_{D }  \varphi _{n}\left( \alpha \right)  d^{2}\alpha \right) \ket{n}. 
\end{align}
\end{proof}

\section{An Elementary Approach to Exchange the Order of Integral and Summation for $\cal H$-valued Integrals Associated with Power Series}\label{APPth4}
Here we will show a type of the dominated convergence theorem for $\cal H$-valued integrals when an associated sequence of functions is given by a power series. This theorem is proven to verify the relation in Eq.~\ref{Qn22B2}  without invoking neither the Bochner integral nor the Lebesgue integral. Its generalized version is  Theorem~\ref{Th4G}, whose proof necessitates  the Bochner integrability and the monotone convergence theorem. 
\begin{theorem} \label{Th4}
Let  $ {\cal H} = \ell ^2$ and $(\ket{n})_{n=0}^\infty $ be an orthonormal basis  on ${\cal H}$. Let be $D(r) :=\set{\alpha\in \mathbb C | \left |\alpha\right|\le r}$ and $  \sum_{k=0}^n a_{k} \alpha ^k $ 
be a power series 
 which fulfills  \\ 
\begin{align}
\text{(i) } && \sum ^{\infty }_{n=0} \left | a _{n} \right |^2  \left  |\alpha \right  |^{2n}< \infty,  
\quad \alpha \in D(r) ,  
 \\ 
\text{(ii)} &&  \int_{D(r)} \sum ^{\infty }_{n=0}  \left | a_{n} \right |^2 \left| \alpha \right |^{2n} d^2 \alpha < \infty 
.  \end{align}
Then, it holds that \begin{align}
 \int_{D(r)} \left(  \sum ^{\infty }_{n=0} a _{n} \alpha ^n | n\rangle \right)  d^{2}\alpha  =\sum_{n=0} ^{\infty} \left(\int_{D(r)} a_{n}   \alpha  ^n  d^{2}\alpha \right) | n\rangle. \label{B22kana0}
\end{align}
\end{theorem}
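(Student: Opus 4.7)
The plan is to approximate the $\mathcal{H}$-valued integrand by its polynomial truncations on the compact disc $D(r)$, integrate each truncation termwise by linearity, and then pass to the limit using \emph{uniform} convergence in the $\mathcal{H}$-norm. This route deliberately bypasses the Bochner and Lebesgue machinery used in Theorem~\ref{Th4G}, in keeping with the stated aim of the appendix.

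First I would set $f_N(\alpha) := \sum_{n=0}^{N} a_n \alpha^n \ket{n}$ and $f(\alpha) := \sum_{n=0}^{\infty} a_n \alpha^n \ket{n}$, and observe that by orthonormality $\|f(\alpha)-f_N(\alpha)\|^2 = \sum_{n=N+1}^{\infty} |a_n|^2 |\alpha|^{2n}$. Assumption (i) applied at the boundary point $|\alpha|=r$ yields $\sum_{n=0}^{\infty} |a_n|^2 r^{2n}<\infty$, so that $|a_n|^2 |\alpha|^{2n} \le |a_n|^2 r^{2n}$ furnishes a Weierstrass majorant uniform in $\alpha \in D(r)$. Hence $f_N \to f$ \emph{uniformly} in $\mathcal{H}$-norm on the compact disc, and $f$ is norm-continuous as a uniform limit of the evidently norm-continuous polynomial vectors $f_N$.

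Next, by Remark~\ref{remark20} the continuity of $f$ on a compact domain guarantees that the $\mathcal{H}$-valued Riemann integral $\int_{D(r)} f(\alpha)\, d^2\alpha$ exists in $\mathcal{H}$, and likewise for each $f_N$. Because each finite sum commutes with integration, one has $\int_{D(r)} f_N(\alpha)\, d^2\alpha = \sum_{n=0}^{N}\left(\int_{D(r)} a_n \alpha^n\, d^2\alpha\right)\ket{n}$, namely the right-hand side of Eq.~\eqref{B22kana0} truncated at level $N$. The $\mathcal{H}$-valued triangle inequality (Theorem~\ref{sankakuIntTheo}) combined with the uniform bound from the previous step then yields
\begin{align*}
\left\| \int_{D(r)} (f - f_N)\, d^2\alpha \right\| \le \int_{D(r)} \|f - f_N\|\, d^2\alpha \le |D(r)| \sup_{\alpha \in D(r)} \|f(\alpha) - f_N(\alpha)\|,
\end{align*}
which vanishes as $N \to \infty$. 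Letting $N \to \infty$ in the truncated identity then delivers Eq.~\eqref{B22kana0}, and the convergence of the series on the right in $\mathcal{H}$ follows automatically from that of the integral on the left.

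The hard part is genuinely confined to the first step: upgrading the pointwise hypothesis (i) to uniform convergence in $\mathcal{H}$-norm on the closed disc. Once (i) is read as including the boundary value $|\alpha|=r$, the Weierstrass $M$-test dispatches this at once, and the remainder of the argument is essentially bookkeeping. Note that condition (ii) plays no active role in this proof; it becomes the essential hypothesis only in the more general setting of Theorem~\ref{Th4G}, where the integrand need not arise from a power series and uniform convergence can genuinely fail.
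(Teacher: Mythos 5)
Your proof is correct and follows essentially the same route as the paper's: truncate the $\mathcal{H}$-valued power series, commute the finite sum with the integral by linearity, and control the tail via the Weierstrass majorant $|a_n|^2|\alpha|^{2n}\le |a_n|^2 r^{2n}$ obtained from hypothesis (i) at the boundary, together with the triangle inequality for vector-valued integrals. The one substantive difference is the final tail estimate: you bound $\int_{D(r)}\|f-f_N\|\,d^2\alpha$ by $|D(r)|\sup_{\alpha\in D(r)}\|f-f_N\|$ directly from uniform convergence, which makes hypothesis (ii) genuinely redundant, whereas the paper routes the same estimate through Schwarz's inequality and the term-wise integrability of the scalar series $\sum_n|a_n|^2|\alpha|^{2n}$, which is where (ii) enters; your observation that (ii) is only needed in the general setting of Theorem~\ref{Th4G} is accurate and slightly sharpens the statement.
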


\begin{remark}
The area $D$ is not necessary in the form of the disk.  We merely use the condition 
$|D |= \int_D d^2 \alpha  < \infty$. 
\end{remark}

\begin{remark}
For the verification of the relation in Eq.~\eqref{Qn22B2}, one may proceed to define the form of the state family as $  \ket {\varphi _\alpha  } : =  \sum ^{\infty }_{n=0} e ^{-|\alpha| ^2 /2 } a _{n} \alpha ^n | n\rangle $ instead of the form in Eq.~\eqref{statefamilyvf}. 
\end{remark}


%

\begin{proof}

Let us define 
\begin{align}
 M_n: = |a_n| ^2 |r|^{2n}, \\  g_n (\alpha ): = \sum_{k=0}^n |a_k|^2 |\alpha |^{2k} ,\quad  f (\alpha ): = \lim_{n \to \infty }  g_n (\alpha ).
\end{align}
Let us note that the condition (i) implies  $\sum_n M_n $ is convergent and that the following inequality holds  
\begin{align}
 |g_n(\alpha ) -  g_m (\alpha ) | \le \sum_{k = n+1} ^m M_k. 
\end{align}  
This means 
 the sequence of functions  $(g_n)_n$ 
 converges to $f $ uniformly, namely, 
\begin{align}
 (\forall \epsilon >0 ) \exists N> 0; (\forall n \ge N) \sup_{\alpha \in D} |f(\alpha ) - g_n (\alpha ) |< \epsilon.  
\end{align}  Therefore, we can exchange the order  of the integration and infinite summation as there exists  a sufficiently large $N >0 $ such that for $n \ge N$ it holds
\begin{align}
& \left | \int_D f(\alpha ) d^2\alpha -   \sum_{k=0}^n \int_D |a_k|^2 \left | \alpha \right |^{2k} d^2\alpha     \right | \nonumber \\
= & \left | \int_D f(\alpha ) d^2\alpha -  \int_D g_n (\alpha ) d^2\alpha     \right | \nonumber \\   \le & \int_D  \left | f(\alpha ) -   g_n (\alpha ) \right |d^2\alpha < |D| \ \epsilon ,    
\end{align} where we use the fact that a finite summation and an integration are commutable due to the linearity of integrals in the first line, and the first inequality in the final line is due to the triangle inequality for integrals.  
Thus far  we have proven
\begin{align}
 \int_D \sum_{k=0}^\infty  |a_k|^2  \left | \alpha \right |^{2k} d^2\alpha  = \sum_{k=0}^\infty  \int_D |a_k|^2  \left | \alpha \right |^{2k} d^2\alpha < \infty ,  \label{QnThusFar}
\end{align}
where the finiteness is due to the condition (ii). This is nothing more than the term-wise integrability of a power series. We will associate this relation to the square summable property in the number space ${\cal H}$. 

Let us remind that  a convergent power series defines a continuous function. This imples the following family of state vectors 
\begin{align}
  \ket {\varphi _\alpha  } : =  \sum ^{\infty }_{n=0} a _{n} \alpha ^n | n\rangle  \in {\cal H} \label{statefamilyvf}
\end{align}
is continuous with respect to $\alpha \in D(r) $, that is, it holds   
\begin{align}
 \| \ket {\varphi_ \alpha}  - \ket {\varphi_\beta}  \| \to 0  \qquad \left (|\alpha - \beta | \to 0 \right ).  
\end{align}
Therefore, its integral over the area $D$ is well-defined (as the limit of a Riemann sum): 
\begin{align}
 \phi :=&\int_{D(r)}  \ket{\varphi_\alpha }d ^2 \alpha =  \int_{D(r)} \left(  \sum ^{\infty }_{n=0} a _{n} \alpha ^n | n\rangle \right)  d^{2}\alpha. 
\end{align}
Similarly, integrals of truncated states in the following form exist, 
\begin{align}
 \phi^{(N-1)} :=& \int_{D}   \sum _{n = 0} ^N  a _{n} \alpha ^n   \ket{n} \ d^{2} \alpha. \label{QnFinSumPhi}
\end{align}
By using the triangle inequality for integrals and Schwartz's inequality,  we obtain  
\begin{align}
\| \phi-  \phi^{(N)} \| = & \left \|  \int_{D(r)} \left(  \sum ^{\infty }_{n=N} a _{n} \alpha ^n | n\rangle \right)  d^{2}\alpha \right \| \nonumber \\
\le & \int_{D(r)} \left\|  \sum ^{\infty }_{n=N} a _{n} \alpha ^n | n\rangle \right\|  d^{2}\alpha \nonumber \\
=&  \int_D \left (  { \sum ^{\infty }_{n=N} \left | a _{n} \right |^2 \left | \alpha \right| ^{2n}   } \right ) ^{1/2}d^2 \alpha  \nonumber \\
\le &  |D|^{1/2}  \left ( \int_D  { \sum ^{\infty }_{n=N} \left | a _{n} \right |^2 \left | \alpha \right| ^{2n}   }  d^2 \alpha.
\right )^{1/2}\end{align}
Since the integral  in the last expression vanishes as $N \to \infty $ due to Eq.~\eqref{QnThusFar}, the sequence of states $(\phi^{(N)}) $ converges to $\phi$: 
\begin{align}
\| \phi-  \phi^{(N)} \|  \to 0 \quad (N \to \infty ). \label{QnPhiCau}
\end{align}

%
%
%
%

Next, let us define 
\begin{align}
 \mathbb C \ni b_n := \int _D  a_{n}   \alpha ^{n }  d^{2}\alpha.  
\end{align}
We can readily show that the sequence $(b_n)$ is square-summable as follows: 
 Due to Schwartz's inequality it holds 
\begin{align}
 |b_n| = &  \left|\int_{D }  a_{n}  \alpha ^n   d^{2}\alpha \right| 
  \le   \int_{D } \left| a_{n}  \alpha ^n  \right|  d^{2}\alpha  \nonumber  \\
  \le &  |D|^{1/2} \sqrt{ \int_{D } \left| a_{n}  \alpha ^n  \right|^2  d^{2}\alpha}. 
\end{align} Then, use of Eq.~\eqref{QnThusFar} yields 
\begin{align}
 \sum_n |b_n|^2  
  \le & |D |  \sum_n   
  { \int_{D } \left| a_{n}    \alpha ^n  \right|^2  d^{2}\alpha} < \infty.    
\end{align} 
Therefore, the state vector in the form of  
\begin{align}
 \psi : =\sum_{n=0}^\infty b_n \ket{n} =& \sum  _{n= 0}^\infty  \left(\int_{D }  a _{n}  \alpha ^n  d^{2}\alpha \right) \ket{n},  
\end{align} exists in $\cal H $ as well as its truncated states
\begin{align}
 \psi^{(N-1)} : =\sum_{n=0}^N b_n \ket{n} =& \sum  _{n= 0}^N \left(\int_{D }  a _{n}  \alpha ^n  d^{2}\alpha \right) \ket{n}.  \label{QnFinSumPsi}  
\end{align}
Obviously,   $(\psi ^{(N)})$  defines a Cauchy sequence converges to  $\psi$  in ${\cal H}$,   \begin{align}
 \|\psi - \psi^{(N)}\| \to 0 \qquad (N \to \infty). \label{QnPsiCau}
\end{align}
In turn, another obvious fact is  $\| \phi^{(N)} - \psi^{(N)} \| =0$ as  the summations in Eq.~\eqref{QnFinSumPhi} and Eq.~\eqref{QnFinSumPsi} are finite. 

Finally combining Eqs.~\eqref{QnPhiCau} and \eqref{QnPsiCau} with  the following  triangular inequality 
\begin{align*}
\left \| \phi - \psi \right\| = & \left \|  \phi- \phi^{(N)} +  \phi^{(N)} - \psi^{(N)} + \psi^{(N)} - \psi \right \| \nonumber \\
\le & \left \|  \phi- \phi^{(N)} \right \| +   \left \|  \phi^{(N)} - \psi^{(N)} \right \|+   \left \|  \psi^{(N)} - \psi \right \|
\nonumber \\
 =& \left \|  \phi- \phi^{(N)} \right \|+   \left \|  \psi^{(N)} - \psi \right \|,  
\end{align*}
we obtain 
\begin{align}
 \left \| \phi - \psi \right\| =\left \|  \phi- \phi^{(N)} \right \|+   \left \|  \psi^{(N)} - \psi \right \|  \to 0  \qquad (N \to \infty). 
\end{align} 
This relation implies the conclusion of our theorem
\begin{align}
 \int_{D}   \sum _{n = 0} ^\infty  \varphi _{n}\left( \alpha \right) \ket{n} \ d^{2} \alpha  = \sum  _{n= 0}^\infty \left(\int_{D }  \varphi _{n}\left( \alpha \right)  d^{2}\alpha \right) \ket{n}. 
\end{align}
\end{proof}

\end{document}